\documentclass[letterpaper]{article}

\title{Error-Tolerant Exact Query Learning of Finite Set Partitions with Same-Cluster Oracle}
\date{\today}
\author{%
    Adela Frances DePavia\\
    \texttt{adepavia@uchicago.edu}\\
    University of Chicago
    \and
    {Olga Medrano Mart\'{i}n del Campo} \\ \texttt{omedranomdelc@uchicago.edu}\\
     University of Chicago
    \and
     {Erasmo Tani} \\ \texttt{etani@uchicago.edu}\\
     University of Chicago
}
\usepackage{arxiv_style}
\usepackage{fullpage}

\begin{document}
\maketitle

\begin{abstract}
This paper initiates the study of active learning for exact recovery of partitions exclusively through access to a same-cluster oracle in the presence of bounded adversarial error. We first highlight a novel connection between learning partitions and correlation clustering. Then we use this connection to build a R\'enyi-Ulam style analytical framework for this problem, and prove upper and lower bounds on its worst-case query complexity. Further, we bound the expected performance of a relevant randomized algorithm. Finally, we study the relationship between adaptivity and query complexity for this problem and related variants.
\end{abstract}

\section{Introduction}\label{section:introduction}

Learning cluster structure from data is a task %
with many applications. While the statistical setting is typically concerned with using batch data to approximately recover cluster structure with high probability, some applications allow for the learner to make explicit queries, and some require exact recovery guarantees. In particular, consider applications in which one has to carry out potentially expensive measurements or time-consuming experiments to learn whether two objects are part of the same class. Such settings arise in several scientific domains, particularly in bioinformatics \cite{ balding1996comparative, bouvel2005combinatorial,grebinski1998reconstructing,sorokin1996new}.

One model for such clustering tasks is that in which the learner's only information comes from having access to a \emph{same-cluster} oracle. In this setting, the learner obtains information about a hidden partition $\cC$ of a finite set $V$ by repeatedly choosing two elements $u,v \in V$ and asking questions of the form ``Are $u$ and $v$ part of the same cluster?'', i.e. given $\cC$ and $V$, a same-cluster oracle is an oracle $\alpha_{\cC}$ that takes as input $u$ and $v$ in $V$, and returns $1$ if $u$ and $v$ belong to the same cluster in $\cC$, and $-1$ otherwise. The power and limitations of same-cluster oracles have been studied in a variety of settings, including the clustering problem described above \cite{ashtiani2016clustering, LM2021, ReyzinSrivastava2007,saha2019correlation}.

However, this model makes the arguably unrealistic assumption that all queries return the correct answer. In the motivating applications, queries are often at risk of failure or subject to experimental noise, and the learner's goal is to recover the partition in spite of these errors. These errors are often not persistent: in the presence of noise, repeating an experiment multiple times may yield different answers. In spite of these errors, practitioners may need to achieve exact recovery of the underlying partition. However, existing theory either focuses on the error-free regime \cite{LM2021, ReyzinSrivastava2007} or assumes that errors are persistent and focuses on probabilistic guarantees, rather than exact recovery \cite{del2022clustering, peng2021towards}.

In this paper, we address this literature gap by formalizing and initiating the study of learning partitions via a same-cluster oracle in the presence of non-persistent errors. A typical way to incorporate uncertainty and noise is to assume that errors occur independently at random with some small probability on every query. However, it is not possible to guarantee exact recovery of the underlying clusters in this model. Instead, we focus on a model in which the errors are chosen potentially adversarially, but are bounded, so that the partition can be recovered exactly. In particular, we say a same-cluster oracle $\alpha_{\cC}$ is $\ell$-faulty if it may return an incorrect answer up to $\ell$ times. We do not assume that errors are persistent: if a single pair of elements is queried repeatedly, an $\ell$-faulty oracle may return inconsistent responses. We define the $\ell$-\emph{bounded error partition learning} (\ourproblem) problem, as the problem of exactly recovering a hidden partition via access to an $\ell$-faulty same-cluster oracle.

We introduce a two-player game based on \emph{correlation clustering}, and we show that the minimax value of this game is closely linked to the query complexity of the \ourproblem{} problem. We then use this game as a framework to give a lower bound for the query complexity of this task. We also give a simple algorithm to solve this problem and we analyze its worst-case as well as its expected query complexity.

We then explore the non-adaptive setting, in which algorithms are constrained to submit all of their queries in a single round. When considering the cost of a clustering algorithm, practitioners may seek to minimize the total number of oracle queries (i.e. experiments) performed, or they may care more about an algorithm's \textit{adaptivity}. An \textit{adaptive} algorithm can submit queries sequentially and make decisions about future queries in a manner dependent on the result of the previous queries, whereas a \textit{non-adaptive} algorithm is one which submits its full set of queries simultaneously and then makes decisions based on the full set of responses. For example, in biomedical applications massive parallel sequencing allows practitioners to submit a large number of queries in a single round; in such cases a practitioner may then be willing to submit more total queries in exchange for only having to submit a single round of queries. We leverage existing results to completely characterize the complexity of this problem in the non-adaptive same-cluster oracle model and we give general information-theoretic lower bounds for learning partitions via any faulty binary oracle.

We find that the \ourproblem{} problem occupies a unique position at the intersection of different research streams. In fact, the study of this problem complements work on clustering with same-cluster query advice, and the techniques used for its analysis draw from the theory of graph learning with oracle queries and the study of R\'enyi-Ulam liar games\footnote{We provide the reader with relevant background from these areas in Section \ref{subsection:background}.}. We believe that this problem could provide fertile ground for investigation by members of any of these research communities, and we are hopeful that the impact of its introduction will go beyond the results presented in this paper.

\subsection{Background and Related Results}\label{subsection:background}
\paragraph{Clustering with same-cluster oracles}
In the error-free regime, \citet{LM2021} studied the query complexity of recovering and verifying partitions of a finite set in different oracle models. They prove tight lower-bounds on the query complexity of learning partitions with error-free same-cluster oracles, and point out that an algorithm proposed by \citet{ReyzinSrivastava2007} exactly achieves this query complexity.

\citet{mazumdar2017clustering} initiate the study of clustering with same-cluster oracles in the presence of persistent i.i.d. errors. They point out that under these assumptions, learning a partition has a strong relationship to recovering community structure in the stochastic block model (SBM). %
This work inspired a productive line of research on the i.i.d. noise model, which also leverages connections to the SBM to study related problems. In fact, this problem has been studied in the setting when $k=2$ \cite{green2020clustering}, when the underlying clusters are nearly-balanced \cite{peng2021towards}, and in the semi-random noise model, in which errors occur with some i.i.d. probability, but when they do occur the (erroneous) answer may be chosen adversarially~\cite{del2022clustering}.

\paragraph{Active learning of graph with oracles} The field of graph learning focuses on learning graph structure with oracle access to the graph in question\footnote{In this paper, we invoke the phrase ``graph learning'' in reference to problems which seek to exactly recover graph structure by actively making queries to some graph oracle, as opposed to the wide field of problems pertaining to learning graphs from data, which often focus on approximate or probabilistic results (i.e. such topics as surveyed by \cite{Xia_2021}).}. Numerous settings have been considered in the literature, including instances where the target graphs are matchings \cite{alon2004learning}, stars or cliques \cite{alon2005learning}, or Hamiltonian cycles \cite{sorokin1996new}. Further research examines Las Vegas algorithms \cite{abasi2019learning} and the role that adaptivity plays for these problems \cite{chang2014learning, grebinski2000optimal}. In a related paper, \cite{ReyzinSrivastava2007} give an algorithm that learns the partitions of a graph with $n$ vertices using $O(nk)$ shortest path queries, as well as an algorithm that learns such partition using $O(n\log n)$ edge counting queries. \cite{angluin2008learning} give an algorithm that can learn a graph with $n$ vertices using $O(\log n)$ edge detection queries per edge. Following this line of inquiry, \cite{LM2021} find that the exact worst-case number of queries needed to learn a partition of $[n]$ into $k$ clusters in the same-cluster query model is exactly $n(k-1) - \binom{k}{2}$ if $k$ is known by the learner a priori, and $nk - \binom{k+1}{2}$ otherwise.

\paragraph{Geometric clustering with oracle advice} 
 \citet{ashtiani2016clustering} introduce the problem of clustering data when the learner can make limited same-cluster queries, and they show that this leads to a polynomial-time algorithm for k-means clustering under certain geometric assumptions. In recent years, the model of Ashtiani~\etal has inspired a variety of other related works which aim to understand the benefits of accessing a same-cluster oracle (e.g. \cite{bressan2020exact,bressan2021margin,del2022clustering} ). \citet{ailon2018approximate} show that making a small number of queries to a same-cluster oracle allows one to improve exponentially the dependency on the approximation parameter $\varepsilon$ and on the number $k$ of clusters for polynomial time approximation schemes for some versions of correlation clustering.
\citet{saha2019correlation} also consider the problem of using same-cluster queries to aid the solution of a correlation clustering instance, and they give algorithms with performance guarantees that are functions of the optimal value of the input instance.

\paragraph{Correlation clustering} 
\citet{bansal2002correlation} introduced the problem of correlation clustering. In this problem, one is given a signed graph $G=(V,E,\sigma)$, where $\sigma:E\to\{\pm1\}$ is a function representing one's prior belief about which pairs of vertices belong to the same cluster. The goal of the problem is to return a partition of the vertices of $G$ into clusters that minimizes the amount of disagreement with the edge signs given by $\sigma$ (See Section \ref{section:RU-correlation-clustering-game} for more details). The problem is known to be NP-Hard. A multitude of variations to this problem have been proposed, including versions with weighted edges \cite{ ailon2008aggregating,demaine2006correlation}, a version where the number of clusters is constrained to some fixed $k$ \cite{giotis2006correlation}, and an agreement maximization setting. Different assumptions have also been studied, such as instance stability and noisy partial information \cite{makarychev2014bilu,makarychev2015correlation,mathieu2010correlation}.

\paragraph{R\'enyi-Ulam games} In his autobiography, Stanisław Ulam, introduced the following two player game \cite{ulam1991adventures}. One player (the responder) thinks of a number in $[N]$ for some $N \in \N$, and the other player (the questioner), given $N$, tries to guess the number by asking only yes/no questions. The main twist to this setup is that the responder is allowed to lie up to $\ell$ times. We refer to this game as the \emph{Ulam's liar game}. We give a short overview of the history of this problem in Appendix~\ref{appendix:renyi-ulam}. Following the study of this question, the term ``R\'enyi-Ulam games'' has been used to identify a wide range of problems involving asking questions to an oracle who is allowed some limited amount of lying (see e.g. \cite{pelc2002searching}). The question of finding the worst-case query complexity of the \ourproblem{} problem naturally lends itself to be formulated as R\'enyi-Ulam game, in which the learning algorithm takes the role of the questioner, and the oracle plays the part of
the responder.

\subsection{Our Results}\label{subsection:our-results}
We first consider the \ourproblem{} problem in the adaptive setting. For a target partition $\cC$, let $k$ denote the number of clusters in $\cC$. There are two particular regimes of interest for the \ourproblem{} problem: that in which $k$ is known to the algorithm, and that in which $k$ is unknown. We show that the worst-case query complexity of this problem can be analyzed by considering the optimal value of a game introduced in Section~\ref{section:RU-correlation-clustering-game}, which we refer to as R\'enyi-Ulam Correlation Clustering (RUCC) game. In Section~\ref{ssec:adaptive-error-tolerant-lowerbound} we use the framework of this game to establish the following result in the nontrivial regime of $k< n$:

\begin{theorem}\label{thm:lower-bound} Whenever $k < n$, every algorithm to solve the \ourproblem{} problem must make at least:
\begin{equation}\label{eq:lower-bound-equation}
    n(k-1) - {k \choose 2} +\max \left\{{(\ell-1)\over 2}n + {k \over 2}, 0\right\} + \ell.
\end{equation}
many queries in the worst case, regardless of whether $k$ is known or unknown.
\end{theorem}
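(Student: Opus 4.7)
The plan is to establish this lower bound by constructing an explicit adversary strategy for the R\'enyi-Ulam Correlation Clustering game introduced in Section~\ref{section:RU-correlation-clustering-game}. The bound decomposes into three summands, and I would construct a unified adversary that forces each of them simultaneously.

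The first summand $n(k-1) - \binom{k}{2}$ is the error-free lower bound established by \citet{LM2021}. Any adversary (including one with a lie budget) can enforce this bound by simulating the error-free strategy truthfully, without expending any lies. For the $+\ell$ summand, I would show that once the error-free adversary has been defeated, the adversary can keep an alternative partition viable by lying on $\ell$ additional queries. The key fact is that if two partitions differ on only one pair $(u,v)$, the learner must accumulate $\ell+1$ consistent responses on $(u,v)$ to rule one of them out, so each lie extends the game by one round.

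For the middle term $\max\{\frac{\ell-1}{2}n + \frac{k}{2}, 0\}$, the plan is to consider a family of ``single-element swap'' partitions $\{\cC'_v\}_{v \in V}$, where $\cC'_v$ is obtained by moving element $v$ from its cluster in $\cC$ to a different cluster. Crucially, $\cC$ and $\cC'_v$ disagree only on queries involving $v$. The adversary can therefore keep $\cC'_v$ viable whenever fewer than $\ell+1$ queries involving $v$ have ``inconsistent'' answers; a careful amortized counting over all $n$ elements and $k$ clusters then shows that ruling out all such alternatives requires at least $\frac{(\ell-1)n}{2} + \frac{k}{2}$ queries beyond the error-free baseline (whenever this quantity is positive).

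The main obstacle is ensuring that these three contributions can be enforced \emph{simultaneously} rather than in isolation: a naive sequential argument would yield only the maximum, not the sum. To overcome this, I would define a potential function $\Phi$ on RUCC game states that tracks both the remaining flexibility of the error-free adversary (say, the number of distinct partitions still viable up to a suitable equivalence) and the remaining lie budget for each surviving candidate. The key structural lemma to prove is that for any learner query, there exists an adversary response that decreases $\Phi$ by at most $1$; summing over queries, the initial value of $\Phi$ then bounds the query complexity from below by \eqref{eq:lower-bound-equation}.
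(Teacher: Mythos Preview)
Your high-level ingredients match the paper's proof of Lemma~\ref{lemma:LM-responder-lowerbound}: the RUCC game framework, the single-element swap partitions (the paper phrases this as ``moving $v$ from $C_a$ to $C_b$ must incur cost $\geq \ell+1$''), and the end-game switch for the final $+\ell$ summand all appear there. However, your plan has a genuine gap at the crucial step. You assert that ``a careful amortized counting over all $n$ elements and $k$ clusters'' yields $\tfrac{(\ell-1)n}{2} + \tfrac{k}{2}$, but this is precisely the part that requires real work, and you give no indication how the counting goes. Likewise, the potential function $\Phi$ is never defined; it is not clear that such a $\Phi$ with the correct initial value and per-query decrement of at most $1$ exists, and the paper does \emph{not} use one.

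The paper's argument is instead a direct structural analysis of the terminal state. It fixes the explicit responder strategy that answers $-1$ whenever $(0,k)$-consistency is preserved, and then studies the weighted graph $G_{T_{\text{final}}}$ when the game ends. Unique $(\ell,k)$-consistency forces $\deg_{a,b}(v) \geq \ell+1$ for every $v\in C_a \cup C_b$; averaging this over $C_a \cup C_b$ and applying handshaking lower-bounds the total edge weight in $G_{T_{\text{final}}}[C_a \cup C_b]$. Subtracting the $n_a+n_b-1$ negative edges guaranteed by unique $k$-colorability and the intra-cluster positive weights $w(Y_a), w(Y_b)$, summing over all pairs $\{a,b\}\in\binom{[k]}{2}$, and then minimizing over continuous values $y_a \geq 0$ is what produces the middle summand. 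This degree-averaging and edge-decomposition step is the technical heart of the proof, and your proposal supplies neither it nor a concrete alternative. (A minor point: two $k$-partitions cannot differ on exactly one pair in the $k$-fixed setting, so your phrasing for the $+\ell$ argument needs adjustment, though the underlying intuition---switch to answering consistently with the surviving alternative so its cost stays fixed while the other's must climb past $\ell$---is correct and is exactly what the paper does.)
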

We note that the above result is independent of computational resources but rather speaks to the inherent information bottleneck of the problem. The authors consider the formulation of the RUCC game and its use as an analytical tool in establishing the above bound to be one of the key technical contributions of this work.

In Section~\ref{section:adaptativity-considerations} we study the problem under the adaptivity lens. Non-adaptive algorithms are often desirable, or even necessary, in applications where queries can be submitted in parallel but take a long time to answer. It is easy to see that the number of queries needed to solve any problem non-adaptively is always at least as large as the number of queries needed to solve the same problem adaptively. We first analyze the error-free version of the problem in the non-adaptive setting. Our results, which incorporate known results in the literature are summarized in Theorem~\ref{thm:error-free-adaptive-summary}.

We then study the impact of adaptivity in the setting of $\ell$-faulty oracles that may deliver non-persistent errors. One advantage of considering the non-adaptive regime is that it allows for different error models, including those which are not well-defined in the adaptive setting. In particular, we broaden our problem instance to include algorithms with access to arbitrary, potentially more powerful, binary oracles, and we consider an error model in which the number of errors is upperbounded by a fixed fraction of the number of queries. This is a natural model motivated in applications which is distinct from the $\ell-$faulty oracle setting. For this more general setting, we give lower bounds on the query complexity of the problem based on coding theory. We summarize our results for error-tolerant partition learning in the following theorem:
\begin{theorem}\label{thm:error-tolerant-adaptive-summary}The query complexity, denoted $q^*$, of the \ourproblem{} and its generalization to arbitrary binary oracles, is outlined in Table~\ref{table:error-tolerant-adaptivity-table} below. In the same-cluster oracle setting, the number of errors is upperbounded by a fixed parameter $\ell$. In contrast, in the non-adaptive case for arbitrary oracles, the number of errors incurred by the oracle is upperbounded by $c \cdot q$ for some $c\in(0,0.5)$, where $q$ is the number of queries made the algorithm. 
\begin{table}[h]
\vskip 0.05in
\begin{center}
\begin{small}
\begin{sc}
\begin{tabular}{c||cccc}
\hline
Adaptivity & $k$ known & $k$ unknown  \\
\hline\hline
    \rule{0pt}{4ex} Non-adaptive    & See Lemma~\ref{lemma:non-adaptive-with-errors-simple-case} %
    & $(2\ell + 1){n\choose 2}$
     \\[0.5cm]
     \hline
    \rule{0pt}{4ex} \begin{tabular}[x]{@{}c@{}}{\scriptsize Non-Adaptive}\\ {\scriptsize Arbitrary binary oracle} \\ $\ell = c\cdot q$\end{tabular}   & $q^* \geq \Omega\left(\frac{n\log k}{1-H_2(c)}\right)$ & $q^* \geq \Omega\left(\frac{n\log n}{1-H_2(c)}\right)$\\[0.5cm]
    \hline
    \rule{0pt}{4ex} Adaptive & \eqref{eq:lower-bound-equation} {\scriptsize $\leq q^*  \leq (\ell+1)\left(n(k-1) - {k \choose 2}\right) + \ell $} & {\scriptsize \eqref{eq:lower-bound-equation} $\leq q^* \leq (\ell+1)\left(nk - {k+1 \choose 2}\right) + \ell$}
\end{tabular}
\end{sc}
\end{small}
\end{center}
\caption{Query complexity bounds in the presence of $\ell$ errors. From the top left to the bottom right entry, results establishing these bounds can be found in Lemma~\ref{lemma:non-adaptive-with-errors-simple-case};  Lemma~\ref{lemma:information-theoretic-lowerbounds}; Theorem ~\ref{thm:lower-bound}; Proposition~\ref{prop:adaptive-error-tolerant-upperbound}.}\label{table:error-tolerant-adaptivity-table}
\vskip -0.1in
\end{table}
\end{theorem}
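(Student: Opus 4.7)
The theorem consolidates bounds established in four separate results; my plan is to sketch how I would argue each entry of Table~\ref{table:error-tolerant-adaptivity-table} in turn.

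For the first row (non-adaptive, same-cluster oracle) with unknown $k$, the bound $(2\ell+1)\binom{n}{2}$ follows from querying every pair in $\binom{[n]}{2}$ with multiplicity $2\ell+1$: since the oracle returns at most $\ell$ incorrect answers in total, the majority vote on every single pair equals the true same-cluster indicator, determining $\cC$ exactly. The case of known $k$ is deferred to Lemma~\ref{lemma:non-adaptive-with-errors-simple-case}, where a finer count leveraging knowledge of the cluster count is carried out.

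For the second row (non-adaptive, arbitrary binary oracle), I would prove the lower bound via a sphere-packing argument from coding theory. A non-adaptive scheme with $q$ binary queries associates to each admissible partition $\cC$ a clean response vector $w(\cC)\in\{\pm1\}^q$, and the faulty oracle may emit any vector at Hamming distance at most $cq$ from $w(\cC)$. Exact recovery therefore forces the codewords $\{w(\cC)\}$ to have pairwise Hamming distance strictly greater than $2cq$. The Hamming packing bound then gives
\[
2^q \;\geq\; N \cdot \sum_{i=0}^{\lfloor cq \rfloor}\binom{q}{i} \;\geq\; N \cdot 2^{H_2(c)\,q - o(q)},
\]
so $q = \Omega\left(\log N / (1-H_2(c))\right)$, where $N$ counts the admissible partitions. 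Plugging in $\log S(n,k) = \Theta(n\log k)$ when $k$ is known and $\log B_n = \Theta(n\log n)$ when it is unknown recovers the two stated bounds.

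For the third row (adaptive, same-cluster oracle), the lower bound is exactly Theorem~\ref{thm:lower-bound}, already established. For the upper bound, let $q_0(n,k)$ denote the error-free query complexity from \cite{LM2021, ReyzinSrivastava2007}, namely $n(k-1)-\binom{k}{2}$ when $k$ is known and $nk-\binom{k+1}{2}$ otherwise. I would robustify such an error-free algorithm as follows: execute it as if responses were truthful; whenever a later query exposes an inconsistency, localize the conflict to a single previously answered query, mark that response as faulty, and re-pose the corresponding question to correct course. Since the oracle lies at most $\ell$ times, at most $\ell$ restart blocks can occur, each contributing another pass of length at most $q_0$, and a final batch of $\ell$ verification queries certifies consistency of the recovered partition with the remaining responses. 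This yields the claimed $(\ell+1)q_0 + \ell$ budget. The main technical obstacle here is the localization step: one must argue that every inconsistency flagged during a run can be unambiguously attributed to a single prior response, so that the amortized overhead per adversarial error is truly $O(q_0)$ rather than worse. It is precisely this accounting that produces the multiplicative factor $(\ell+1)$ (instead of the naive $(2\ell+1)$ from simple repetition) and matches the lower bound of Theorem~\ref{thm:lower-bound} up to lower-order terms.
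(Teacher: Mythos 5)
Your arguments for the non-adaptive entries are essentially the paper's: the $(2\ell+1)\binom{n}{2}$ upper bound comes from $(2\ell+1)$-fold repetition with majority vote, and the arbitrary-oracle lower bound is precisely the sphere-packing (Hamming) bound applied to the implied $(cq)$-error-correcting code of size $B_n$ or ${n \brace k}$; the paper packages this as the asymptotic rate bound $R(C)\leq 1-H_2(\delta/2)+o(1)$, but your explicit ball-volume computation is the same thing.

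The genuine gap is in your argument for the adaptive upper bound. You propose a ``run the error-free algorithm, detect an inconsistency, localize the faulty prior response, and restart'' scheme, and you flag the localization step as the main obstacle. The obstacle is fatal: the Reyzin--Srivastava algorithm, when run with faulty responses, never produces a query that \emph{exposes} an inconsistency. Each of its queries is between a fresh element $v$ and a single representative of an existing cluster; if the oracle lies on such a query, $v$ is simply placed (or not placed) in the wrong cluster, $v$ is never queried again, and the algorithm terminates with an incorrect partition without ever seeing contradictory responses. There is no ``later query that exposes an inconsistency'' to latch onto, so the localization step has nothing to localize. The paper's actual route (Lemma~\ref{lemma:repetition-bound}, instantiated in Proposition~\ref{prop:adaptive-error-tolerant-upperbound}) is both simpler and correct: \emph{repeat each query of the error-free algorithm until $\ell+1$ of its responses agree}, and use that consensus as the answer. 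Call a response spurious if it disagrees with $\ell+1$ responses to the same query; every spurious response is an error, so there are at most $\ell$ of them across the entire run, and the algorithm makes at most $\ell+1$ non-spurious queries per simulated query. This directly gives the $(\ell+1)q_0 + \ell$ bound with no amortization argument. Note also that your characterization of ``simple repetition'' as costing $(2\ell+1)q_0$ describes only fixed-multiplicity majority voting; the paper's adaptive-stopping repetition is still a pure repetition strategy and already achieves $(\ell+1)q_0+\ell$, so the cleverness you attribute to error localization is in fact achieved without it.
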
 
Finally, in Section~\ref{section:las-vegas-type-bounds}, we focus on deriving Las-Vegas type guarantees for the \ourproblem{} problem. We prove a bound on the expected number of queries made by the randomized algorithm given in Section~\ref{ssec:adaptive-error-free}, as shown in the following theorem.
\begin{theorem}\label{thm:las-vegas-type-bound}
    There exists an algorithm that solves the \ourproblem{} problem by making at most
    \begin{equation}
        (\ell+1) \left({n(k+1)\over 2}-k\right) + \ell.
    \end{equation}
    queries in expectation both when $k$ is known to the algorithm and when it is not.
\end{theorem}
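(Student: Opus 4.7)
The plan is to combine the expected query complexity of the randomized error-free algorithm from Section~\ref{ssec:adaptive-error-free} with the error-free-to-$\ell$-faulty transformation that already underlies the adaptive worst-case upper bound reported in Table~\ref{table:error-tolerant-adaptivity-table} and proved in Proposition~\ref{prop:adaptive-error-tolerant-upperbound}. In other words, I would prove the more general statement ``if $\mathcal{A}$ is any error-free algorithm for partition learning with expected query count $Q$, then a suitable inflation of $\mathcal{A}$ solves the $\ell$-faulty problem in expected $(\ell+1)Q + \ell$ queries,'' and then substitute $Q = \frac{n(k+1)}{2} - k$.

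First, I would recall the randomized algorithm from Section~\ref{ssec:adaptive-error-free}: sample a uniformly random permutation of the $n$ elements, process them in that order and, for each incoming element, probe the current cluster representatives in a uniformly random order, stopping as soon as the oracle answers ``yes'' (the element joins that cluster) or all current representatives have answered ``no'' (the element opens a new cluster). To compute its expected cost, I would separate the $n$ elements into the $k$ ``representatives'' (the first arrival from each cluster) and the $n-k$ ``joiners.'' By symmetry, the cluster representative of a joiner has a uniformly random rank among the representatives present at the time of the joiner's arrival, so it contributes $(K+1)/2$ queries in expectation, where $K$ is the number of representatives present. A careful accounting over arrival positions collapses the joiner contribution to $(n-k)\cdot\frac{k+1}{2}$, and the representatives contribute $\binom{k}{2}$ unsuccessful queries (the $j$-th arriving representative probes $j-1$ earlier ones). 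Summing gives
\[
\mathbb{E}[q_{\mathcal{A}}] \;=\; (n-k)\cdot\tfrac{k+1}{2} + \binom{k}{2} \;=\; \tfrac{n(k+1)}{2} - k,
\]
independently of whether $k$ is supplied to the algorithm.

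Second, I would invoke the pathwise error-tolerant transformation used in Proposition~\ref{prop:adaptive-error-tolerant-upperbound}: run $\mathcal{A}$, but replace each logical query it issues with a repeat-until-safe block that asks the faulty oracle the same question until the gap between the two received answers exceeds what the adversary can still flip given its remaining budget, and then run a final consistency-resolution phase costing at most $\ell$ extra queries. The key feature I would emphasize is that this transformation inflates each \emph{realized} logical query of $\mathcal{A}$ by a factor of at most $(\ell+1)$, not just the worst-case count: the blowup is applied query-by-query, and the ``wasted'' queries across all blocks are amortized against the adversary's total lie budget $\ell$. Consequently, on every joint realization of $\mathcal{A}$'s internal randomness and the oracle's lies, the transformed algorithm $\mathcal{A}'$ uses at most $(\ell+1)q_{\mathcal{A}} + \ell$ queries.

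Finally, since the pathwise bound holds for every realization, I take expectations over the internal randomness of $\mathcal{A}$ (the adversary can only push the bound upward, but is itself capped by $\ell$) to conclude
\[
\mathbb{E}[q_{\mathcal{A}'}] \;\leq\; (\ell+1)\,\mathbb{E}[q_{\mathcal{A}}] + \ell \;=\; (\ell+1)\left(\tfrac{n(k+1)}{2} - k\right) + \ell,
\]
which is exactly the bound of Theorem~\ref{thm:las-vegas-type-bound}; the argument is agnostic to whether $k$ is known, since $\mathcal{A}$ itself is. I expect the main obstacle to be justifying rigorously that the $(\ell+1)$ overhead is applied to the \emph{realized} query count rather than to the worst case, and that the additive $+\ell$ term does not degrade under the adversary's best response to the algorithm's randomness; this boils down to checking that the adversary's lies on different logical queries are budget-sharing rather than budget-multiplying, which I expect to follow directly from the block-by-block structure of the transformation.
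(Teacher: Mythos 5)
Your high-level plan matches the paper's: compute (a bound on) the expected query count for the error-free randomized Reyzin--Srivastava algorithm, then lift it to the $\ell$-faulty setting via the pathwise $(\ell+1)q+\ell$ repetition transformation (the paper's Lemma~\ref{lemma:repetition-bound}) and linearity of expectation. The second half of your argument --- that the blowup applies query-by-query to the realized query count and the $+\ell$ is a shared budget --- is exactly the right observation and is what makes taking expectations legitimate.

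The gap is in the first half. You assert, as an equality, that the joiner contribution ``collapses'' to $(n-k)\cdot\frac{k+1}{2}$ and hence that $\mathbb{E}[q_{\mathcal{A}}]=(n-k)\frac{k+1}{2}+\binom{k}{2}=\frac{n(k+1)}{2}-k$. This is not an identity. Your own symmetry argument gives the joiner's expected number of probes as $\frac{K+1}{2}$ with $K$ the (random) number of representatives present at that joiner's arrival, and $K$ is strictly less than $k$ for early joiners; summing gives $\sum_{\text{joiners}}\mathbb{E}\!\left[\frac{K+1}{2}\right]\leq (n-k)\frac{k+1}{2}$, with equality only in degenerate cases. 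Concretely, for $n=3$, $k=2$ with cluster sizes $(2,1)$, the true expectation is $\frac{7}{3}$ while $\frac{n(k+1)}{2}-k=\frac{5}{2}$. The paper instead first derives the exact formula $\mathbb{E}[Q_{\mathcal{C}}]=(n-k)+\sum_{a\neq b}\frac{n_a n_b}{n_a+n_b}$ (Lemma~\ref{lemma:RS-expected-queries}) and then bounds each cross term by $\frac{n_a+n_b}{4}$ to obtain $\frac{n(k+1)}{2}-k$. Your representative term $\binom{k}{2}$ is in fact exact (the two directed terms for each unordered pair $\{a,b\}$ sum to $1$), and since you only need an upper bound the conclusion survives --- but you should replace ``$=$'' by ``$\leq$'' and justify it by $K\leq k$ rather than by an unproved ``careful accounting.'' A secondary nit: for $k$ known, the last representative issues no probes, so the representative contribution is $\binom{k-1}{2}$; the paper sidesteps this by a coupling argument showing $\mathbb{E}[Q_{\mathcal{C}}^k]\leq\mathbb{E}[Q_{\mathcal{C}}]$, and you would want to say the same rather than claim the algorithm is literally identical in the two regimes.
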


\section{Technical Preliminaries}\label{section:technical-preliminaires}
\paragraph{Basic definitions} Given a positive integer $n$ we denote with $[n]$ the set $\{1,..., n\}$. Given any finite set $V$ we denote with $\binom{V}{2}$ the set:
\[
    \binom{V}{2} \defeq \{\{i,j\}\subseteq V \mid i\neq j\}.
\]
A graph $G=(V,E)$ is a pair containing a finite vertex set $V$ and a subset $E\subseteq \binom{V}{2}$. A \emph{$k$-coloring} of a graph $G=(V,E)$ is a function $\chi:V \to [k]$ such that $\chi(u) \neq \chi(v)$ for every $\{u,v\} \in E$. The pre-images $\chi^{-1}(a)$ for $a \in [k]$ are called the \emph{color classes} of $\chi$. A surjective $k$-coloring is a $k$-coloring such that $\chi(V) = [k]$, i.e. one in which all color classes are nonempty. A graph is said to be \emph{$k$-colorable} if there exists some $k$-coloring $\chi$ of $G$, and it is said to be \emph{uniquely $k$-colorable} if, in addition, the partition of the vertex set induced by the preimages of any such map $\chi$ is unique. In this paper, we will work with \emph{uniquely surjectively $k$-colorable graphs}, namely those where every surjective $k$-coloring of the graph induces a unique partition as described above. From this point onward, in the paper, we will use these two terms interchangeably as, for the nontrivial setting $k<n$ the two notions are equivalent. Further discussion of this topic can be found in Appendix \ref{appendix:surjectively-uniquely-kc}. Given a graph $G=(V,E)$ a pair $\{u,v\}$ of vertices is $k$\emph{-inseparable} if there exists no surjective $k$-coloring $\chi$ of $G$ such that $\chi(u) \neq \chi(v)$.

\paragraph{Partitions} Given a finite set $V$ a $k$-partition of $V$ is a collection of $k$ pairwise disjoint non-empty subsets $\cC = \{C_1, ... , C_k\}$ of $V$ such that $\bigcup_{a=1}^k C_a = V$. We will denote by $n$ the cardinality $|V|$ of $V$ and we will often assume without loss of generality that $V = [n]$. Given two elements $u$ and $v$ of $V$, we write  $u\sim_\cC v$ (resp. $u\not\sim_{\cC} v$) for the statement $\exists C_a \in \cC, \{u,v\}\subseteq C_a$ (resp.$\nexists\; C_a \in \cC, \{u,v\}\subseteq C_a$). We denote by $B_n$ the $n^{th}$ Bell number, i.e. the number of partitions of a set of $n$ elements. The Bell numbers satisfy:
\begin{equation}\label{eq:asymptotics-of-bell-numbers}
    \log B_n = \Omega(n \log n) \hspace{1cm}(\text{\cite{de1981asymptotic})}.
\end{equation}
We denote by ${n \brace k}$ the Stirling number of the second kind, i.e. the number of ways to partition a set of $n$ labelled objects into $k$ non-empty unlabelled subsets. The Stirling numbers of the second kind satisfy:
\begin{equation}\label{eq:asymptotics-of-stirling-numbers}
    \log {n\brace k} = \Omega(n \log k) \hspace{1cm}(\text{\cite{rennie1969stirling}}).
\end{equation}

\paragraph{Coding theory background} A binary code of length $q$ is a subset $C \subseteq \{0,1\}^q$. Elements of $C$ are called \emph{codewords}. The \emph{distance} of a code $C$ is defined to be the minimum Hamming distance between any two of its codewords: $\Delta C \defeq \min_{c_1 \neq c_2 \in C} |\{ i \in [q] \mid c_1(i) \neq c_2(i)\}|$. The \emph{size} of the code is the cardinality of $C$. The \emph{relative distance} of a code $C$ with distance $\Delta C$ is defined as $\delta(C) \defeq \Delta C / q$. The \emph{rate $R(C)$} of a code is given by
$R(C) \defeq \log(\abs{C})/q$. A classical result from coding theory states that a code $C$ with relative distance $\delta(C) \geq \delta$ and rate $R(C)$ must satisfy
\begin{equation}\label{eq:bound-on-rate-of-codes}
    R(C) \leq 1-H_2(\delta/2) + o(1),
\end{equation}
where $H_2(\cdot)$ denotes the binary entropy function. A code is said to be \emph{$e$-error-correcting} if $\Delta C > 2e$. (See e.g. \cite{GuruswamiRudraSudan_codingtheory}).

\section{The R\'enyi-Ulam Correlation Clustering Game.}\label{section:RU-correlation-clustering-game} 
Recall the definition of the \ourproblem{} problem: given $V = [n]$ and an $\ell$-faulty $\alpha_{\cC}$-oracle for some partition $\cC$ of $[n]$, we wish to exactly recover the structure of $\cC$. The query complexity of this problem is the minimum number of queries to $\alpha_\cC$ that every algorithm would need to make to recover $\cC$ in the worst case, expressed as a function of $n$ and $\ell$ and the parameter $k \defeq |\cC|$. In order to analyze the query complexity of this problem, we introduce a zero-sum game, which we refer to as the \emph{R\'enyi-Ulam Correlation Clustering (RUCC) Game}.

An instance of the extended correlation clustering\footnote{We note that this setup differs slightly from an actual instance of weighted correlation clustering, in that one is allowed to have both a positive edge and a negative edge along the same pair of vertices. From an optimization perspective, the two problems are equivalent up to some constant additive change in the objective function.} problem is specified by a weighted graph with two types of edges: positive edges and negative edges, i.e. $G=(V,E^+,E^-, w^+,w^-)$, where $w^+:E^+ \to \R_{\geq0}$, and $w^-:E^- \to \R_{\geq0}$. The problem is then to find a partition $\cC$ of $V$ that achieves the minimum disagreement with $G$ defined as follows. Given a partition $\cC$ of $V$ its \emph{cost} with respect to an instance $G$ is given by:
\[
    \cost_{G}(\cC) \defeq  \sum_{\substack{uv\in E^+\\u\not\sim_\cC v}} w^+(uv) + \sum_{\substack{uv\in E^-\\ u\sim_\cC v}}w^-(uv).
\]
The cost of an instance $G$ in the extended correlation clustering problem is given by:
\[
    \textrm{CC}(G) \defeq \min_{\substack{\cC \in \cP(V(G))}} \cost_G(\cC),
\]
where $\cP(V(G))$ is the set of partitions of $V(G)$. The corresponding $k$-constrained problem minimizes the same objective over $\cP_k (V(G))$, the collection of partitions of $V$ into exactly $k$ non-empty sets:
\[
    \textrm{CC}_k(G) \defeq \min_{{\cC\in \cP_k (V(G))}} \cost_G(\cC).
\]

A graph $G$ is said to be \textit{$(\ell,k)$-consistent} if $\textrm{CC}_k(G) \leq \ell$. Similarly, $G$ is \emph{uniquely $(\ell,k)$-consistent} if the problem $\textrm{CC}_k(G)$ has a unique feasible solution $\cC$ with $\cost_G(\cC) \leq\ell$. 

For a graph $G=(V,E^+,E^-, w^+,w^-)$, we define the following weighted unsigned subgraphs: $G^+ \defeq (V, E^+, w^+)$ and $G^- \defeq (V, E^-, w^-)$.

The RUCC game is parameterized by positive integers $n$, $\ell$ and $k$, and it has two players, the questioner and the responder. At each iteration $t$ the questioner plays a pair of elements $e_t \defeq\{u,v\}\in {[n]\choose 2}$ and the responder plays a response $r_t \in \{\pm 1\}$. At any iteration $T$, we define the graph $G_T$ with vertices $V = [n]$ and edge sets $E^+_T, E^-_T$ given by
\[
    E^+_T \defeq \{e_t \mid t\in [T], r_t = 1 \}, \hspace{1cm} E^-_T \defeq \{e_t \mid t\in [T], r_t = -1 \},
\]
and weights
\[
    w^+_T (uv) \defeq |\{t\in [T] \mid e_t =\{u,v\}, r_t=1\}|, \hspace{0.4cm} w^-_T (uv) \defeq |\{t\in [T] \mid e_t =\{u,v\}, r_t = -1\}|.
\]

We enforce the following constraint: the responder's answer must be such that the graph $G_T$ is $(\ell,k)$-consistent at every iteration $T$. The game ends when $G_T$ is uniquely $(\ell,k)$-consistent. The payoff of the game to the responder is the number of iterations needed for the game to end. We refer to the \emph{value} of the game as the minimax payoff to the responder, i.e. the maximum payoff achieved by any responder strategy against the questioner's corresponding value-minimizing strategy. We summarize the main insight of this section in the following remark.
\begin{remark}
    The value of the R\'enyi-Ulam Correlation Clustering game with parameters $n, \ell$ and $k$ is the query complexity of the \ourproblem{} problem for $k$ known. We denote this value by $\cV(n,\ell,k)$.
\end{remark}

\subsection{Lower bound on $\cV(n,\ell,k)$}\label{ssec:adaptive-error-tolerant-lowerbound}
In this section, we prove Theorem~\ref{thm:lower-bound}. We establish \eqref{eq:lower-bound-equation} by demonstrating a lower bound to the value of the RUCC game by analyzing the performance of a specific response strategy. We do this by extending a technique first introduced by Liu and Mukherjee~\cite{LM2021} to the case of a faulty oracle. The essence of their strategy for the response player in the case of $\ell=0$ is for the responder to answer $-1$ for as long as possible without making the minimum cost solution for the instance greater than $0$, i.e. while maintaining $(0,k)$-consistency. 

The main strategy we will adopt for the response player is the following: the player will answer $1$ when a pair $\{u,v\}$ at time $T$ is queried if and only if the pair is $k$-inseparable in $G_T^-$. Intuitively, this corresponds to the response player answering $-1$ for as long as possible without ever raising the optimal value of the correlation clustering instance above $0$.

Analyzing this game yields following lemma, which directly implies Theorem~\ref{thm:lower-bound}: 
\begin{lemma}\label{lemma:LM-responder-lowerbound}
    Whenever $k < n$:
    \begin{equation}\label{eq:RUCC}
        V(n,\ell, k) \geq n(k-1) - {k \choose 2} +\max \left\{{(\ell-1)\over 2}n + {k \over 2}, 0\right\} + \ell.
    \end{equation}
\end{lemma}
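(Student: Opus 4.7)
The plan is to analyze the RUCC game under the responder strategy ``answer $+1$ if and only if the queried pair is $k$-inseparable in the current $G^-$,'' and to show it forces at least the stated number of queries against any questioner. First I would check that the strategy is legal: any $k$-coloring $\chi$ of $G_T^-$ respects every negative edge by construction, and each positive edge $\{u,v\}$ was introduced at a moment when $\{u,v\}$ was $k$-inseparable in the then-current $G^-$; since that earlier graph is a subgraph of $G_T^-$, the same $\chi$ is a $k$-coloring of it and therefore satisfies $\chi(u)=\chi(v)$. Hence the partition induced by any $k$-coloring of $G_T^-$ has $\cost_G=0$, so $\textrm{CC}_k(G_T)=0$ throughout. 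Consequently the game can only end when $G_T$ is uniquely $(\ell,k)$-consistent, which under this strategy amounts to (a) $G^-$ being uniquely $k$-colorable with unique partition $\cC^*$, together with (b) every $\cC'\neq\cC^*$ satisfying $\cost_G(\cC')>\ell$.

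The first term $n(k-1)-\binom{k}{2}$ comes directly from (a): by the extremal bound underlying Liu--Mukherjee, a uniquely $k$-colorable graph on $n>k$ vertices has at least $n(k-1)-\binom{k}{2}$ edges, and since each query contributes at most one edge, at least that many queries are necessary. For the rest I would focus on the cheapest alternative partitions, namely the single-vertex moves $\cC'_{v,i\to j}$ obtained from $\cC^*$ by relocating $v\in C_i$ into some $C_j$ with $j\neq i$. Once $\cC^*$ is fixed, every answer under our strategy is consistent with $\cC^*$, so $\cost_G(\cC'_{v,i\to j})$ equals the number of queries falling on pairs in $\{v\}\times((C_i\setminus\{v\})\cup C_j)$; requirement (b) then becomes the constraint that at least $\ell+1$ such queries are made for every $v$ and every $j\neq i(v)$. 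Writing $X_v$ for the number of intra-$\cC^*$ queries at $v$ and $d^-(v)$ for its $G^-$-degree, summing over $j\neq i(v)$ yields $(k-1)X_v+d^-(v)\geq(k-1)(\ell+1)$; averaging this across vertices and subtracting off the edges already paid for in (a) produces the additional $(\ell-1)n/2+k/2$ queries, with the $\max\{\cdot,0\}$ absorbing the boundary case $\ell=0$. The final $+\ell$ then follows from a standard R\'enyi--Ulam confirmation argument: even once the questioner has localized $\cC^*$, the responder may still expend its remaining lie budget by flipping $\ell$ further answers while keeping $\cost(\cC^*)\leq\ell$, and each such adversarial flip forces the questioner to issue one additional verifying query.

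The main obstacle I expect is the counting argument for (b). A naive LP-style averaging of the per-vertex constraints only gives $\Omega(n(\ell+1)/2)$, which for small $\ell$ is already dominated by the term from (a); recovering the exact coefficient $(\ell-1)/2$ of $n$ requires carefully pairing Phase~1 edges with Phase~2 constraints so as not to double-count, and exploiting the structural fact that extremal uniquely $k$-colorable graphs consist of a $k$-clique skeleton together with $n-k$ ``pendant'' vertices each attached to $k-1$ of the clique vertices. Small-$\ell$ regimes or very unbalanced class-size configurations in $\cC^*$ may require separate verification to justify both the $\max\{\cdot,0\}$ and the constant $k/2$.
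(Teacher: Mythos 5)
Your setup mirrors the paper's quite closely: you use the same responder strategy (answer $-1$ unless the pair is $k$-inseparable in $G_T^-$), the same legality argument, and you correctly identify the two sources of lower-bound mass — the $n(k-1)-\binom{k}{2}$ edges forced by unique $k$-colorability of $G^-$, and the single-vertex-move constraints imposed by unique $(\ell,k)$-consistency. Your per-vertex constraint $(k-1)X_v + d^-(v)\geq(k-1)(\ell+1)$ is correct, and summing it over $v$ gives $(k-1)\,w(Y)+w(E^-)\geq \tfrac{(\ell+1)(k-1)n}{2}$, which (using $\sum_v X_v = 2w(Y)$, $\sum_v d^-(v)=2w(E^-)$) is exactly the aggregate of the per-pair averaged-degree inequalities $(\ell+1)(n_a+n_b)/2\leq w(E(G[C_a\cup C_b]))$ that the paper derives. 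So the constraints you have in hand are equivalent to the paper's.

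Where you go astray is at the finishing step, and your own worry about it is misplaced rather than well-founded. You speculate that recovering the coefficient $(\ell-1)/2$ requires "carefully pairing Phase 1 edges with Phase 2 constraints" and exploiting a structural description of extremal uniquely $k$-colorable graphs (a $k$-clique plus pendants). Neither is needed, and the latter is false in general (uniquely $k$-colorable graphs with the minimum edge count need not have that form). The gap closes with a one-line convex combination of your two linear constraints: taking $\tfrac{1}{k-1}$ times $(k-1)\,w(Y)+w(E^-)\geq \tfrac{(\ell+1)(k-1)n}{2}$ plus $\tfrac{k-2}{k-1}$ times $w(E^-)\geq n(k-1)-\binom{k}{2}$ yields $w(Y)+w(E^-)\geq n(k-1)-\binom{k}{2}+\tfrac{(\ell-1)n}{2}+\tfrac{k}{2}$ directly, for any $\ell\geq 1$ (and for $\ell=0$ one just uses the second constraint alone, which is what the $\max\{\cdot,0\}$ records). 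The paper organizes this same calculation slightly differently — keeping the per-pair $\max$-terms and relaxing over the $w(Y_a)$'s as free nonnegative variables, then invoking an LP bound — but the two routes give the identical number, and yours is arguably more elementary once the missing linear combination is supplied.

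Your $+\ell$ argument is also in the right spirit but is stated too loosely: "each flip forces one more verifying query" is not a statement about the game as defined (the game ends when $G_T$ becomes uniquely $(\ell,k)$-consistent; the responder cannot unilaterally "force a verifying query"). The paper's version is sharper: at time $T^*=T_{\text{final}}-1$ there are two surviving partitions $\cC_1$ (cost $\ell$) and $\cC_2$ (cost $0$); the responder now switches to answering consistently with $\cC_1$, so its cost freezes at $\ell$ and the game cannot end until $\cost(\cC_2)$ is driven above $\ell$, which takes at least $\ell$ more queries beyond $T_{\text{final}}$. You should phrase your final bit in these terms, as a concrete modified responder strategy, rather than as a heuristic about lie budgets.
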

\begin{proof}
    We first describe a responder strategy that achieves a bound of:
    \begin{equation}\label{eq:RUCC-noell}
        n(k-1) - {k \choose 2} +\max \left\{{(\ell-1)\over 2}n + {k \over 2}, 0\right\},
    \end{equation}
    and then we show how to modify this strategy so as to achieve the extra $\ell$ term.
    
    Consider the responder strategy that responds with $-1$ whenever possible, while maintaining the property that $G_T$ is $(0,k)$-consistent. We now show that for any questioner strategy, the Ulam-R\'enyi Correlation Clustering game with this responder must last at least the number of iterations given in~\eqref{eq:RUCC-noell}.
    
    In the Liu \& Mukherjee responder strategy, at every iteration the graph $G_t = (V,E^+_t,E^-_t,w^+_t,w^-_t)$ is $(0,k)$-consistent by definition. In particular, this implies that when the game ends at iteration $\Tfinal$, $G_{\Tfinal}$ must be not only uniquely $(\ell,k)$-consistent, but also uniquely $(0,k)$-consistent. In particular, there is a unique $k$-partition $\cC=\{C_a\}_{a=1}^k$ of the elements of $V$ which has zero disagreement with $G_{\Tfinal}$. Let $n_a \defeq \abs{C_a}$. For any $C_a, C_b \in \cC$, let $G_{\Tfinal}[C_a \cup C_b]$ denote the subgraph of $G_{\Tfinal}$ induced on $C_a$ and $C_b$, and let $\deg_{a,b}(v)$ denote the weighted degree of a vertex in this subgraph. 
    
    Since $G_{\Tfinal}$ must be uniquely $(\ell,k)$-consistent, for any element $v \in V$, the partition $\cC'$ obtained from $\cC$ by moving $v$ from some $C_a\ni v$ to some other cluster $C_b$ should incur a cost of at least $\ell+1$. This, in particular, implies that $\deg_{a,b}(v) \geq \ell + 1 $ for every $v\in V$, and $a, b\in [k]$. (Note that this implicitly uses the fact, that every neighbor of $v$ in $G^+_{\Tfinal}$ is contained in $C_a$.)

    Liu and Mukherjee showed that in order for the graph $G_{\Tfinal}$ generated by the response strategy in the statement of Lemma~\ref{lemma:LM-responder-lowerbound} to be uniquely $(0,k)$-consistent, the graph $G^-_{\Tfinal}$ must be a uniquely $k$-colorable graph. And in particular, this implies that the total weight of the edges of $G^-_{\Tfinal}[C_a \cup C_b]$ must be at least $n_a+n_b-1$. We will assume, without loss of generality, that there exists some set $F_{a,b} \subseteq E(G^-_{\Tfinal}[C_a \cup C_b])$ which has weight exactly $n_a + n_b -1$.

    \begin{figure}
    \centering
    \hspace*{1.8cm}    
    \includegraphics[scale=0.4]{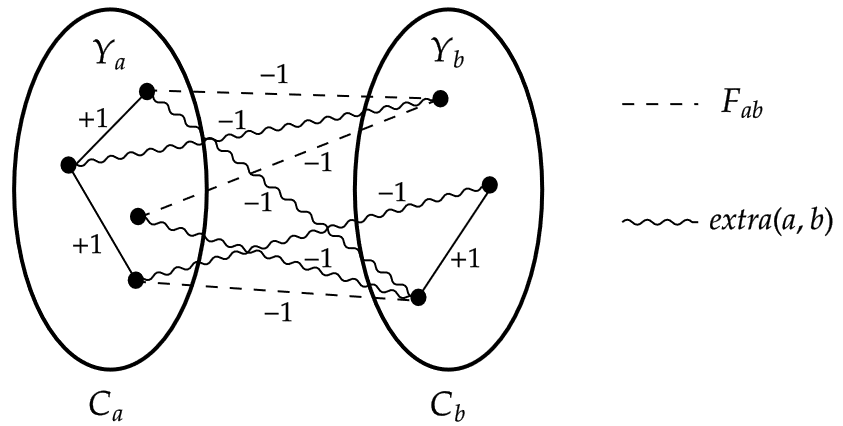}
    \caption{$G_{\Tfinal}[C_{a}\cup C_{b}]$ in the proof of Lemma \ref{lemma:LM-responder-lowerbound}. The edge sets $Y_a$ and $Y_b$ are not graphically represented, but refer to intra-cluster edges contained in $C_a$ and $C_b$.}
    \label{fig:lm-responder}
    \end{figure}
    Fix $C_a, C_b$, for $\{a,b\}\in{[k]\choose 2}$ and consider the subgraph $G_{\Tfinal}[C_a \cup C_b]$. The edges appearing in this subgraph can be divided into four parts, as Figure~\ref{fig:lm-responder} shows: the set $F_{a,b}$, the set 
    \[
        \operatorname{extra}(a,b) \defeq E(G^-_{\Tfinal}[C_a \cup C_b])\setminus F_{a,b},
    \]
    and the sets
    \[
        Y_a \defeq E(G^{+}_{\Tfinal}[C_a]) \hspace{0.5cm}\text{and}\hspace{0.5cm}
        Y_b \defeq E(G^{+}_{\Tfinal}[C_b]).
    \]
    
    The lower bound $\deg_{a,b}(v) \geq \ell + 1$ then implies that for any $a \neq b$:
    \begin{align*}
        (\ell+1) &\leq\min_{v\in C_a \cup C_b} \deg_{a,b}(v) \leq { \sum_{v \in C_a \cup C_b}\deg_{a,b}(v)\over |C_a \cup C_b|} = {2w(E(G[C_a \cup C_b]))\over n_a + n_b}
    \end{align*}
    re-arranging yields:
    \begin{align*}
        {(\ell+1)(n_a + n_b)\over 2} &\leq {w(E(G[C_a \cup C_b]))} = n_a + n_b -1 + w(Y_a) + w(Y_b) + w(\operatorname{extra}(a,b))
    \end{align*}
    which, given the non-negativity of edge-weights, implies:
    \begin{align*}
        w(\operatorname{extra}(a,b))
        &\geq \max\left\{{(\ell-1)(n_a + n_b)\over 2} +1- w(Y_a) - w(Y_b), 0 \right\}
    \end{align*}

    In particular, at the end of the last iteration $T_{\text{final}}$ the total weight of edges in $G_{T_{\text{final}}}$ is given by:
    \begin{align*}
        w(E(G_{T_{\text{final}}})) &= n(k-1) - \binom{k}{2} + \sum_{a\in [k]} w(Y_a)  + \sum_{a,b\in\binom{[k]}{2}} w(\operatorname{extra}(a,b))\\
        &\geq n(k-1) - \binom{k}{2} + \sum_{a\in [k]} w(Y_a)  +  \sum_{a,b\in\binom{[k]}{2}} \max\left\{{{(\ell-1)(n_a + n_b)\over 2} +1 - w(Y_a) - w(Y_b)
        , 0} \right\}
    \end{align*} 
    This bound holds for all valid values of $\ell$. In particular, consider the case $\ell = 0$, which corresponds to the error-free regime. For this value, we observe that the value inside the maximum term is always $0$: $(\ell -1) \leq 0$,  the partition $\cC$ of $G_{\Tfinal}$ is a $k$-partition thus $n_a \geq 1$ for all $a\in [k]$, and $w(Y_a)$ and $w(Y_b)$ are non-negative. Thus for $\ell = 0$ we recover the bound established by \cite{LM2021} in the error-free regime.
    \[
        w(E(G_{T_{\text{final}}})) \geq n(k-1) - \binom{k}{2} + \sum_{a\in [k]} w(Y_a) \geq n(k-1) - \binom{k}{2}
    \] 

    For the case that $\ell \geq 1$, we can relax the lower bound by considering continuous values in place of integer-valued $w(Y_a)$:
    \begin{align*}
        w(E(G_{T_{\text{final}}})) &\geq n(k-1) - \binom{k}{2} + \min_{y\in \R^k_{\geq 0}} \left(\sum_{a\in [k]} y_a  + \sum_{a,b\in\binom{[k]}{2}} { \max\left\{{(\ell-1)(n_a + n_b)\over 2} +1 - y_a - y_b, 0 \right\}}\right)\\
        &\geq n(k-1) - {k\choose 2} + \frac{(\ell -1 )n}{2} + \frac{k}{2}
    \end{align*}
    Because the total weight of edges in $G_{\Tfinal}$ is exactly equal to $\Tfinal$, the bound in~\eqref{eq:RUCC-noell} follows.\\

    We now show how to achieve the extra $+\ell$ term in~\eqref{eq:RUCC}, which will complete the proof of the lemma. We consider the following modification of the responder strategy above: consider the iteration at time $T^* \defeq T_{final}-1$, right before the game ends.  By then, there exist two partions $\cC_{1}$ and $\cC_{2}$ satisfying:
    \[
        \operatorname{cost}_{G_{T^*}}(\cC_{1})= \ell \hspace{1cm}\text{and}\hspace{1cm}\operatorname{cost}_{G_{T^*}}(\cC_{2})=0.
    \]
    The responder strategy described above would continue to respond in a way that agrees with $\cC_{2}$ even if that causes the game to terminate after the next query. Instead, we modify the responder strategy to start answering each of the remaining queries in a way that agrees with $\cC_{1}$, rather than $\cC_{2}$. With this strategy, the cost of $\cC_{1}$ in the correlation clustering instance is left unchanged after every query, and hence in order for the game to terminate, the cost of $\cC_{2}$ needs to rise above $\ell$. This requires at least $\ell$ additional queries, resulting in the bound given in \eqref{eq:RUCC}. 

\end{proof}
\vspace{-2mm}
\section{Adaptivity Considerations}\label{section:adaptativity-considerations}

In this section, we prove Theorems~\ref{thm:error-free-adaptive-summary} and ~\ref{thm:error-tolerant-adaptive-summary}. These results examine the relationship between query complexity and adaptivity of algorithms for active learning of partitions. We first summarize the landscape of query complexities for learning partitions in the absence of errors: in Theorem~\ref{thm:error-free-adaptive-summary}, we complement results known in literature  with our own results from Lemmas~\ref{lemma:k-known-non-adaptive-bound} and ~\ref{lemma:error-free-k-unknown} to give a complete characterization of the adaptivity-query complexity trade-off:
\begin{theorem}\label{thm:error-free-adaptive-summary}
    The query complexities of the problem of learning partitions with access to a same-cluster oracle in different regimes are given in Table~\ref{table:adaptivity-table}.
\begin{table}[h]
\vskip 0.05in
\begin{center}
\begin{small}
\begin{sc}
\begin{tabular}{c||ccc}
\hline
    \rule{0pt}{3ex} Adaptivity & $k$ known &  $k$ unknown  \\[0.1cm]
\hline\hline
     \rule{0pt}{4ex} Non-adaptive    &  See Theorem~\ref{thm:k-known-non-adaptive-bound} & ${n\choose 2}$
     \\[0.5cm]
     \hline
    \rule{0pt}{4ex} \begin{tabular}[x]{@{}c@{}}Adaptive\\(At most $k$ adaptive rounds)\end{tabular}  & $n(k-1) - {k\choose 2}$  & $nk - {k+1\choose 2}$ \\
\end{tabular}
\end{sc}
\end{small}
\caption{Adaptivity bounds in the error-free regime. The top row is discussed in Section~\ref{ssec:non-adaptive-error-free} and the bottom row is discussed in Section~\ref{ssec:adaptive-error-free}.}\label{table:adaptivity-table}
\end{center}

\end{table}
\end{theorem}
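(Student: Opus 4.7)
\medskip
\noindent\textbf{Proof plan.} The plan is to verify each cell of Table~\ref{table:adaptivity-table} separately, observing that three of the four cells reduce to results already established in the paper or in prior work, and deferring the only genuinely new combinatorial argument to Theorem~\ref{thm:k-known-non-adaptive-bound}.

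For the adaptive row, the upper bounds are achieved by the algorithm of \citet{ReyzinSrivastava2007}, together with its modification for the case of unknown $k$ analyzed by \citet{LM2021}. The matching lower bound when $k$ is known is an immediate corollary of our Theorem~\ref{thm:lower-bound}: substituting $\ell = 0$ into \eqref{eq:lower-bound-equation}, the $\max$ term evaluates to $0$ whenever $k < n$ (since then $-n/2 + k/2 < 0$), and the bound collapses to exactly $n(k-1) - \binom{k}{2}$. The lower bound $nk - \binom{k+1}{2}$ for unknown $k$ is proven in \cite{LM2021} and I would simply cite it.

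For the non-adaptive, unknown-$k$ cell, the upper bound is trivially achieved by the brute-force strategy that queries every pair in $\binom{[n]}{2}$ and reconstructs the partition from the answers. For the matching lower bound, I would use an adversarial construction: if a non-adaptive algorithm fails to query some pair $\{u,v\}$, then the two partitions $\cC_1 \defeq \{\{w\} : w \in [n]\}$ and $\cC_2 \defeq \{\{u,v\}\} \cup \{\{w\} : w \in [n] \setminus \{u,v\}\}$ induce identical oracle responses on every pair other than $\{u,v\}$ (both answer $-1$). Since $\cC_1 \neq \cC_2$ and $k$ is unknown, any non-adaptive algorithm omitting $\{u,v\}$ must err on at least one of the two inputs, forcing all $\binom{n}{2}$ queries.

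The only remaining cell, non-adaptive with known $k$, is the one piece requiring genuinely new work, and I would isolate it as Theorem~\ref{thm:k-known-non-adaptive-bound}. I expect this to be the main obstacle, since knowing $k$ allows a non-adaptive algorithm to occasionally skip queries whose answers are forced by transitivity combined with the constraint that the final partition has exactly $k$ parts, so the extremal query family is substantially more subtle than in the unknown-$k$ regime.
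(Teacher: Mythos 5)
Your plan matches the paper's structure closely: the adaptive cells are handled by citing the Reyzin--Srivastava algorithms and the lower bounds of Liu--Mukherjee (which, as you note, can equivalently be recovered by setting $\ell=0$ in Theorem~\ref{thm:lower-bound}, where the max term vanishes since $k<n$); the non-adaptive $k$-unknown lower bound is the same adversarial argument as Lemma~\ref{lemma:error-free-k-unknown} (an omitted pair $\{u,v\}$ with all incident queries answering $-1$ cannot be resolved); and the non-adaptive $k$-known cell is deferred to Theorem~\ref{thm:k-known-non-adaptive-bound}, exactly as in the paper. One small piece your plan does not address is the parenthetical claim in the ``Adaptive'' row that these optimal query counts are achievable using at most $k$ adaptive rounds; the paper establishes this by giving parallelized rewrites (Algorithms~\ref{algo:parallel-learn-k} and~\ref{algo:parallel-learn}) of the Reyzin--Srivastava algorithms that exhaust one cluster per round, and you would need to include this observation to fully substantiate the theorem as stated.
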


\subsection{Adaptive Algorithms for the Error-Free Regime}\label{ssec:adaptive-error-free}
In the adaptive setting with no errors, the algorithms of Reyzin and Srivastava for the case of $k$-known and $k$-unknown (Algorithm~\ref{algo:vanilla-RS-k-known} and Algorithm~\ref{algo:vanilla-RS-k-unknown}) are known to achieve the optimal number of queries for the problem \cite{LM2021,ReyzinSrivastava2007}. These algorithms can be thought of as repeatedly picking an element and inserting it into the right cluster in the partition. This is done by comparing the element to one element of each cluster already represented until a match is found. We note that these two algorithms can be made to run in $k-1$ adaptive rounds of questions and $k$ adaptive rounds of questions respectively. We rewrite these algorithms to highlight this property in Algorithm~\ref{algo:parallel-learn-k} and Algorithm~\ref{algo:parallel-learn}. 

A simple consequence of this is that, in the $k$-known (resp. $k$-unknown) regime, constraining the rounds of adaptivity to be at most $k-1$ (resp. $k$) has no impact on the number queries needed to solve the problem.

\subsection{Non-adaptive Algorithms in Error-Free Regime}\label{ssec:non-adaptive-error-free}
In this section we consider algorithms that learn a partition by making a single round of queries to an error-free same-cluster oracle. Note that one can learn the entire partition structure in a single round by querying every pair of elements, thus making $\binom{n}{2}$ many queries. For the case in which $k$ is unknown, this is actually tight, as illustrated in the following result, which is a simple consequence of Proposition 1 from \cite{ReyzinSrivastava2007}. For completeness we include a self-contained proof of this in Appendix~\ref{appendix:deferred-proofs}.
\begin{lemma}\label{lemma:error-free-k-unknown}
There exists no algorithm to learn partitions that uses a single round of queries to a same-cluster oracle, and makes fewer than $\binom{n}{2}$ queries when $k$ is unknown.
\end{lemma}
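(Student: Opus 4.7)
The plan is to give the standard adversarial/indistinguishability argument: assume toward a contradiction that some non-adaptive algorithm $\mathcal{A}$ solves the problem with $q<\binom{n}{2}$ queries in the $k$-unknown regime. Because $\mathcal{A}$ is non-adaptive, its query set $Q\subseteq\binom{[n]}{2}$ is fixed in advance (independent of oracle responses). Since $|Q|<\binom{n}{2}$, there exists a pair $\{u,v\}\in\binom{[n]}{2}\setminus Q$ that $\mathcal{A}$ never queries. I would then exhibit two distinct partitions of $[n]$ that induce identical oracle answers on every element of $Q$; this forces $\mathcal{A}$ to output the same partition against both, contradicting correctness.

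For the two partitions, I would take $\cC_1\defeq\bigl\{\{1\},\{2\},\ldots,\{n\}\bigr\}$ (all singletons, so $k=n$) and $\cC_2$ obtained from $\cC_1$ by merging $u$ and $v$ into a single block $\{u,v\}$ (so $k=n-1$). The key verification is that the same-cluster oracles $\alpha_{\cC_1}$ and $\alpha_{\cC_2}$ agree on every pair $\{x,y\}\neq\{u,v\}$. This is immediate: under $\cC_1$ every pair receives answer $-1$, and under $\cC_2$ any pair $\{x,y\}$ disjoint from $\{u,v\}$ or sharing exactly one element with $\{u,v\}$ is still split across two different blocks of $\cC_2$, so it also receives $-1$. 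The only pair on which the two oracles disagree is $\{u,v\}$ itself, which by assumption lies outside $Q$.

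Since the responses on $Q$ are identical under $\cC_1$ and $\cC_2$, the deterministic output of $\mathcal{A}$ on these two input oracles is the same, yet $\cC_1\neq\cC_2$, contradicting exact recovery. (Note that this argument fundamentally uses that $k$ is unknown, so that the algorithm cannot a priori exclude one of $\cC_1,\cC_2$ on the basis of their differing number of clusters.) The only subtlety worth spelling out is the reduction from an arbitrary non-adaptive algorithm to a deterministic one with a fixed query set; this is standard and follows because in the non-adaptive model the query schedule is, by definition, a function of the input parameters $(n,\ell)$ alone, not of the oracle transcript. I do not anticipate any real obstacle beyond stating these observations cleanly.
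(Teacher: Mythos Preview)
Your proposal is correct and is essentially the same argument as the paper's: both exploit an unqueried pair $\{u,v\}$ and observe that the all-singletons partition and the partition obtained by merging $u$ and $v$ yield identical (all $-1$) responses on every other pair, so the algorithm cannot distinguish them. Your write-up is simply a more explicit version of the paper's terse proof.
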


We now consider the case of non-adaptive algorithms for $k$ known. 
When $k=2$, one can learn a partition with only $n-1$ queries. Interestingly, the situation changes drastically when $k\geq 4$. In this case it is impossible to beat the $\binom{n}{2}$ bound by a nontrivial margin, while the case $k=3$ lies in an intermediate regime. Results for all values of $n, k<n$ are stated in the following theorem, which is proved in Appendix~\ref{appendix:proof-of-non-adaptive-results}.
\begin{theorem}\label{thm:k-known-non-adaptive-bound}
    For $k$ known and $k<n$, the number of same-cluster queries needed to learn a partition of $n$ items into $k$ clusters is as described in Table~\ref{table:non-adaptive-k-known-results}:
    \begin{table}[h]
    \begin{center}
    \begin{sc}
    \begin{tabular}{c||@{\hskip 0.25in}c @{\hskip 0.5in} c @{\hskip 0.5in} c @{\hskip 0.5in} c}
        \rule{0pt}{4ex} Values of $(k,n)$ & $k=2$ & \begin{tabular}[x]{@{}c@{}}$k = 3$\\$n = 4$\end{tabular} & \begin{tabular}[x]{@{}c@{}}$k = 3$\\$n \geq 5$\end{tabular} & $k\geq 4$\\
    \hline
        \rule{0pt}{4ex} Queries needed & $n-1$ & 5 & $\binom{n}{2}-\left\lfloor\frac{n}{2}\right\rfloor$ & $\binom{n}{2}-1$
    \end{tabular}
    \end{sc}
    \caption{Query complexity of non-adaptive cluster learning in the error-free regime, for $k$ known.}\label{table:non-adaptive-k-known-results}
    \end{center}
    \end{table}

    \end{theorem}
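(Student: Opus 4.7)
Non-adaptive learning of a $k$-partition of $[n]$ amounts to choosing a query set $Q \subseteq \binom{[n]}{2}$ such that the same-cluster answers under the true partition $\cC$ are consistent with no other $k$-partition $\cC' \neq \cC$. I would handle each column of Table~\ref{table:non-adaptive-k-known-results} separately: for the upper bounds I exhibit an explicit $Q$ and show by cluster-forcing arguments that the answers uniquely determine $\cC$; for the lower bounds I produce a pair of distinct consistent $k$-partitions whenever $Q$ omits too many edges. The case $k=2$ reduces to graph connectivity: if $Q$ spans a connected graph, the $\pm 1$-answers propagate by transitivity to fix the (unordered) $2$-partition, so any spanning tree of $n-1$ edges suffices; conversely, a disconnected $Q$-graph lets one freely reassign a component between the two clusters, yielding multiple consistent $2$-partitions.

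For $k=3$ with $n\geq 5$, my plan for the upper bound is to omit any matching $M$ of size $\lfloor n/2\rfloor$ and argue that every consistent $\cC'$ equals $\cC$. Within each cluster $C$ of $\cC$ with $|C|\geq 3$, the induced $+1$-subgraph is $K_{|C|}$ minus at most one edge, hence connected, so $C$ cannot be split across clusters of $\cC'$. Distinct clusters of $\cC$ are kept separated by the dense $-1$-queries between them. The only residual possibility is that two singletons $\{u\},\{v\}$ of $\cC$ with $\{u,v\}\in M$ could be merged in $\cC'$; but then the remaining $n-2 \geq 3$ vertices, which are $+1$-connected by the surviving queries, would have to collapse into a single additional cluster, giving only two clusters of $\cC'$ in total and contradicting $k=3$. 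For the lower bound, a graph on $[n]$ with more than $\lfloor n/2\rfloor$ edges contains two edges $\{v,x\},\{v,y\}$ sharing a vertex, so the two distinct $3$-partitions $\{\{v,x\},\{y\},V\setminus\{v,x,y\}\}$ and $\{\{v,y\},\{x\},V\setminus\{v,x,y\}\}$ agree on every query in $Q$.

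For $k\geq 4$ with $n>k$, the upper bound comes from omitting a single edge $\{u,v\}$; the cluster-forcing reasoning above applies, and the only sensitive case $C=\{u,v\}\in\cC$ is resolved by noting that placing $u,v$ in two separate singletons of $\cC'$ would produce $k+1$ clusters, contradicting the $k$-partition requirement. For the lower bound, whenever $|\bar Q|\geq 2$, two unqueried edges (whether disjoint or incident) yield a pair of indistinguishable $k$-partitions via the same constructions used above, padded on the remaining $n-4$ (or $n-3$) vertices into $k-3$ extra clusters, which is possible because $n>k$. The exceptional entry $k=3,n=4$ is handled similarly: with $\bar Q=\{\{1,2\},\{3,4\}\}$, the $3$-partitions $\{\{1,2\},\{3\},\{4\}\}$ and $\{\{1\},\{2\},\{3,4\}\}$ agree on the remaining four queries, so $|Q|\geq 5$, while omitting any single edge suffices by the $k\geq 4$ argument.

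The delicate step will be the upper bound for $k=3$, $n\geq 5$: one must rule out not only the usual swaps between clusters but also the collapse of two matched singletons of $\cC$ into a paired cluster of $\cC'$, a reduction that only succeeds by simultaneously invoking the exact cluster count $k=3$ and the connectivity of the $+1$-subgraph on the complementary vertices. It is precisely the failure of this interplay at $n=4$ (when the complement has only two vertices) that forces the separate case in the theorem statement.
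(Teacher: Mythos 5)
Your proposal tracks the paper's proof closely: same upper-bound constructions (unqueried graph = near-perfect matching for $k=3$, $n\geq 5$; one omitted edge for $k\geq 4$), same lower-bound gadgets (two incident unqueried edges for $k\geq 3$; two disjoint ones, padded out to $k$ clusters using $n>k$, for $k\geq 4$), and same handling of the exceptional $(k,n)=(3,4)$ entry. The paper's $k=3$, $n\geq 5$ uniqueness argument is phrased via a balanced bipartition $(U,L)$ with complete intra-half query sets and an explicit extension principle rather than your $+1$-connectivity argument, but these are equivalent since any matching of size $\lfloor n/2\rfloor$ determines such a bipartition; a fully worked write-up would just need to tighten a few details --- the $+1$-subgraph on a cluster of size $\geq 3$ is $K_{|C|}$ minus a matching (not ``at most one edge''), the incident-edge gadget pads with $k-2$ (not $k-3$) extra clusters on the $n-3$ remaining vertices, and one should also explicitly rule out splitting a size-$2$ cluster of $\cC$ whose internal edge lies in $M$, which (as with your merged-singletons case) would force $|\cC'|\neq 3$.
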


\subsection{Adaptive Algorithms in the Error-Tolerant Regime}\label{ssec:adaptive-error-tolerant-upperbound}
In this section we give a simple randomized algorithm to solve the problem of learning partitions with bounded error. The algorithm is an instantiation of the following framework: given any algorithm for the error-free version of the problem which makes $q_{\text{error\_free}}$ queries in the worst case, repeat each query the algorithm makes until $\ell+1$ of the answers agree. It is easy to see that this framework always leads to an algorithm for the error-tolerant regime that makes exactly $(\ell+1)q_{\text{error\_free}} +\ell$ queries in the worst case. See Lemma~\ref{lemma:repetition-bound} in Appendix~\ref{appendix:deferred-proofs} for a formal statement and proof of this result. Instantiating this framework with the algorithms of \cite{ReyzinSrivastava2007} gives the following:
\begin{proposition}\label{prop:adaptive-error-tolerant-upperbound}
    There exist algorithms to learn partitions from same-cluster queries with bounded adversarial error that makes at most:
    \begin{equation}\label{eq:upper-bound-expression}
        (\ell+1)\left(n(k-1) - {k \choose 2}\right) + \ell\hspace{0.5cm}\text{and}\hspace{0.5cm}
        (\ell+1)\left(nk - {k+1 \choose 2}\right) + \ell
    \end{equation} 
    queries for the case of $k$ known and $k$ unknown, respectively.
\end{proposition}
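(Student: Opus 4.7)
The plan is to apply the generic reduction described in the paragraph preceding the proposition statement, formalized in Lemma~\ref{lemma:repetition-bound}, to the two error-free algorithms of Reyzin and Srivastava~\cite{ReyzinSrivastava2007} (Algorithm~\ref{algo:vanilla-RS-k-known} for $k$ known and Algorithm~\ref{algo:vanilla-RS-k-unknown} for $k$ unknown). These algorithms are known to use $n(k-1)-\binom{k}{2}$ and $nk-\binom{k+1}{2}$ queries respectively in the worst case (see Theorem~\ref{thm:error-free-adaptive-summary}), so plugging these values of $q_{\text{error\_free}}$ into the bound $(\ell+1)q_{\text{error\_free}}+\ell$ immediately yields the two expressions in~\eqref{eq:upper-bound-expression}. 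The work reduces to verifying the reduction itself.

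Given an error-free algorithm $\mathcal{A}$, I would construct an error-tolerant algorithm $\mathcal{A}'$ as follows: whenever $\mathcal{A}$ wishes to issue a query $\{u,v\}$, $\mathcal{A}'$ submits the same query repeatedly to the $\ell$-faulty oracle and forwards to $\mathcal{A}$ the first answer that has been returned $\ell+1$ times. For correctness, if $\ell+1$ responses to a single query agree on a value $r\in\{\pm1\}$, then at least one of these responses must be truthful, since the oracle is allowed at most $\ell$ lies across the entire execution. Hence $r$ equals the true same-cluster status of $u$ and $v$, so $\mathcal{A}'$ supplies $\mathcal{A}$ with answers consistent with an error-free execution, and the partition returned at the end is correct.

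For the query count, let $x_i$ denote the number of erroneous responses the adversary allocates to the $i$-th simulated query of $\mathcal{A}$. To produce $\ell+1$ agreeing responses in the worst case, $\mathcal{A}'$ requires exactly $\ell+1+x_i$ calls to the oracle (the $x_i$ lies plus $\ell+1$ truthful repetitions). Summing over the at most $q_{\text{error\_free}}$ simulated queries and using $\sum_i x_i \leq \ell$, the total number of oracle calls is at most $(\ell+1)\,q_{\text{error\_free}} + \ell$, as desired.

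There is no substantive obstacle in this argument; the proposition is essentially a corollary of the reduction, and the one subtle point is the amortized counting in the last step, where it is crucial to charge the adversary's budget of $\ell$ errors globally (one extra query per lie) rather than locally per simulated query (which would produce a much weaker bound of $(2\ell+1)q_{\text{error\_free}}$). Since Lemma~\ref{lemma:repetition-bound} already encapsulates this amortization, the proof of Proposition~\ref{prop:adaptive-error-tolerant-upperbound} consists of a single sentence invoking the lemma with the two Reyzin–Srivastava algorithms.
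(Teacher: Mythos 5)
Your proposal is correct and takes essentially the same route as the paper: reduce to the error-free Reyzin--Srivastava algorithms via the repetition framework of Lemma~\ref{lemma:repetition-bound}, then plug in the worst-case counts $n(k-1)-\binom{k}{2}$ and $nk-\binom{k+1}{2}$. Your per-block accounting ($\ell+1+x_i$ calls with $\sum_i x_i\le\ell$) is a restatement of the paper's ``spurious query'' argument, amortizing the adversary's $\ell$-lie budget globally exactly as the paper does.
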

Pseudocode for the specifics of the algorithms in the above proposition is given in Algorithm~\ref{algo:Robust-RS} and Algorithm~\ref{algo:Robust-RS-k} in Appendix~\ref{appendix:algorithms}. While the above only considers the worst-case query complexity, in Section~\ref{section:las-vegas-type-bounds} we analyze the expected number of queries these algorithms make.

\subsection{Non-Adaptive Algorithms in the Presence of Errors}\label{section:non-adaptive-algorithms}
A simple consequence of the discussion in Sections~\ref{ssec:adaptive-error-free} and ~\ref{ssec:non-adaptive-error-free} is the following tight characterization of the number of queries needed for any non-adaptive algorithm to solve \ourproblem{}:
\begin{lemma}\label{lemma:non-adaptive-with-errors-simple-case}
    The number of same-cluster queries to an $\ell$-faulty oracle that any non-adaptive algorithm must make to learn a partition of $n\geq 5$ items into $k<n$ clusters is $(2\ell + 1)(n-1)$ when $k$ is known and $k=2$; $(2\ell + 1)\left(\binom{n}{2}-\lfloor n/2\rfloor\right)$ when $k$ is known and $k = 3$; $(2\ell + 1)\left(\binom{n}{2}-1\right)$ when $k$ is known and $k\geq 4$;  $(2\ell + 1)\binom{n}{2}$ queries when $k$ is unknown.
\end{lemma}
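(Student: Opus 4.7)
The plan is to prove each of the four bounds via a matching upper bound from query repetition and a lower bound via a coding-theoretic distinguishing-pair argument, building on the error-free non-adaptive characterization given in Theorem~\ref{thm:k-known-non-adaptive-bound} and Lemma~\ref{lemma:error-free-k-unknown}.

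For the upper bound I apply the repetition reduction of Lemma~\ref{lemma:repetition-bound} in the non-adaptive regime. Starting from the optimal non-adaptive error-free algorithm (with $q^{\star}$ queries, where $q^{\star}$ equals $n-1$ for $k=2$, $\binom{n}{2}-\lfloor n/2\rfloor$ for $k=3$, $\binom{n}{2}-1$ for $k\geq 4$, and $\binom{n}{2}$ for $k$ unknown), I replace each query with $2\ell+1$ identical copies. Because the $\ell$-faulty oracle can corrupt at most $\ell$ responses in total, the majority among any query's $2\ell+1$ repetitions is correct, so the learner can reconstruct the true error-free response vector and invoke the original decoder. This uses exactly $(2\ell+1)q^{\star}$ queries, matching each of the four claimed expressions.

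For the lower bound I use the following coding-theoretic observation: any non-adaptive algorithm with query multiset $Q$ succeeds against an $\ell$-faulty oracle only if, for every pair of distinct valid partitions $\cC_1,\cC_2$, the total multiplicity in $Q$ of the distinguishing set $D(\cC_1,\cC_2) \defeq \{\{u,v\}\in\binom{V}{2} : (u\sim_{\cC_1} v)\oplus(u\sim_{\cC_2} v)\}$ is at least $2\ell+1$; otherwise the adversary can flip $\ell$ responses to produce a received vector equidistant from both codewords, making unique decoding impossible. The bound then follows by lifting the error-free lower-bound arguments: for each pair $\{u,v\}$ identified as essential in the error-free proof, I exhibit a pair of valid partitions whose distinguishing set is exactly $\{\{u,v\}\}$, forcing $\geq 2\ell+1$ copies of $\{u,v\}$ in $Q$. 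For unknown $k$, the pair $\cC_1 = \{\{u,v\}\}\cup\{\{w\} : w\neq u,v\}$ and $\cC_2 = \{\{w\} : w\in V\}$ yields $D(\cC_1,\cC_2)=\{\{u,v\}\}$ for every $\{u,v\}\in\binom{V}{2}$, giving $(2\ell+1)\binom{n}{2}$. For known $k\geq 3$, the analogous families of $k$-partition pairs are those used in the error-free lower bound proofs in Appendix~\ref{appendix:proof-of-non-adaptive-results}, which identify $\binom{n}{2}-1$ (respectively $\binom{n}{2}-\lfloor n/2\rfloor$) critical queries each admitting a singleton-distinguishing pair of $k$-partitions.

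The main obstacle is the $k=2$ lower bound: any two distinct 2-partitions differ on a full vertex cut, so no two 2-partitions have a singleton distinguishing set and the direct ``each essential query individually forces $2\ell+1$ copies'' approach does not apply. A more delicate argument must mirror the spanning-connectedness structure of the error-free proof, aggregating per-vertex distinguishing-pair constraints along a spanning-tree-like skeleton to translate the $(n-1)$-query error-free count into the claimed $(2\ell+1)(n-1)$ error-tolerant count; correctly ruling out more economical constructions (e.g.\ $(2\ell+1)$-edge-connected query multigraphs) is the principal technical subtlety of the lemma.
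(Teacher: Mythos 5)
Your upper-bound argument (repeat each query of an optimal error-free non-adaptive scheme $2\ell+1$ times and majority-decode) is correct, and your covering criterion for the lower bound --- every distinguishing set $D(\cC_1,\cC_2)$ must receive total multiplicity at least $2\ell+1$ in the query multiset --- is exactly the right necessary-and-sufficient condition for a non-adaptive scheme to succeed. Your application to $k$ unknown is sound: every $\{u,v\}$ admits the singleton distinguishing pair you describe, so $(2\ell+1)\binom{n}{2}$ queries are necessary and sufficient there, and this matches the paper's entry.

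The argument breaks down for known $k$, and the failure is not confined to $k=2$. For known $k$, \emph{no} pair of $k$-partitions has a singleton distinguishing set: if $D(\cC_1,\cC_2)=\{\{u,v\}\}$, then $\{u,v\}$ must form one cluster in one of the two partitions and two singletons $\{u\},\{v\}$ in the other, with all remaining clusters identical, so $|\cC_1|$ and $|\cC_2|$ differ by one and cannot both equal $k$. Hence the per-pair argument you invoke for $k\geq 3$ has exactly the obstruction you noticed at $k=2$. Moreover, this obstruction cannot be patched, because the lemma's $k$-known entries are not tight once $\ell\geq 1$. For $k\geq 4$, every two-element subset of $\binom{V}{2}$ is a distinguishing set (incident pairs via Lemma~\ref{lem:overlapping-edges-lemma}, disjoint pairs via the construction in the proof of Lemma~\ref{lemma:k-known-non-adaptive-bound}), so the covering condition is just $m_e+m_{e'}\geq 2\ell+1$ for all $e\neq e'$. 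Setting one multiplicity to $\ell$ and the rest to $\ell+1$ satisfies this with only $(\ell+1)\binom{n}{2}-1$ queries, strictly fewer than $(2\ell+1)\left(\binom{n}{2}-1\right)$ for $\ell\geq 1$ and $n\geq 5$, yet the resulting code has minimum distance exactly $2\ell+1$ and so is $\ell$-error-correcting. Analogous constructions beat the stated counts for $k=2$ (any $(2\ell+1)$-edge-connected query multigraph, needing roughly $n(2\ell+1)/2$ edges, is enough) and for $k=3$ (multiplicity $\ell$ on a maximum matching and $\ell+1$ elsewhere). The paper presents this lemma only as a ``simple consequence'' of the error-free results without explicit proof; the repetition argument it gestures at gives the upper bounds but not the lower bounds, and by the constructions above the known-$k$ lower bounds as stated do not hold.
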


While the problem is easily solvable for the case of same-cluster oracles, we now give an information theoretic lower bound on the number of queries for the non-adaptive version of \ourproblem{} where one replaces the same-cluster oracle with an arbitrary binary oracle on the partition $\cC$. Many such oracles can be defined depending on the application. For example, consider the following oracles $\beta,\gamma, \zeta$ and $\eta$ defined on any subsets $S, S_1, S_2 \subseteq V$:
\begin{align*}
    &\beta(S)  \defeq \begin{cases}
        1 & \exists u,v\in S : u\sim_{\mathcal{C}}v\\
        0 & \text{otherwise}
    \end{cases} \qquad\qquad \gamma(S) \defeq \begin{cases}
        1 & \exists a\in[k] : C_{a}\subseteq S \\
        0 & \text{otherwise}
    \end{cases} \\ &\zeta(S)  \defeq \begin{cases}
        1 & \forall a\in[k] : C_{a}\cap S\neq \varnothing \\
        0 & \text{otherwise}
    \end{cases}\quad\eta(S_1, S_2) \defeq \begin{cases}
        1 & \exists C_a: S_1 \cap C_a\neq \varnothing\text{ and }S_2\cap C_a\neq\varnothing \\
        0 & \text{otherwise}
    \end{cases}
\end{align*}

One natural error model to consider for the non-adaptive general oracle case is that in which $\ell$ the number of errors which may occur is some constant fraction $c$ of the total queries $q$ submitted by the algorithm (i.e. $\ell = c \cdot q$). For this setting, our bounds obtained in Section~\ref{ssec:adaptive-error-tolerant-lowerbound} and~\ref{ssec:adaptive-error-tolerant-upperbound} do not apply, however one can still apply information theoretic techniques to obtain meaningful lower bounds.

In the error-free regime, such information theoretic lower bounds on query-complexity for similar problems are well known \cite{angluin2008learning, LM2021, ReyzinSrivastava2007}. In particular, \cite{ReyzinSrivastava2007} give a lower-bound on learning partitions of $\Omega(n \log k)$ for $k$-known, $\Omega(n\log n)$ for $k$-unknown. We adapt these arguments to the error-tolerant regime and thus obtain new lower bounds for the setting $\ell = c\cdot q$.

\begin{lemma}\label{lemma:information-theoretic-lowerbounds}
    Consider an arbitrary $\ell$-faulty binary oracle on partition $\cC$. In the regime in which up to a $c$ fraction of the responses to the queries can be errors, any non-adaptive algorithm learning a hidden partition in this setting must submit at least 
    \[
        \Omega\left(\frac{n\log n}{1-H_2(c)}\right) \hspace{1cm} \text{ and } \hspace{1cm} \Omega\left(\frac{n\log k}{1-H_2(c)}\right) 
    \]
    many queries for the case of $k$ unknown and $k$ known, respectively.
\end{lemma}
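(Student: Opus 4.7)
The approach is to reduce each non-adaptive learning algorithm to an error-correcting code and then invoke the Singleton/Plotkin-style rate bound~\eqref{eq:bound-on-rate-of-codes}. Fix any non-adaptive algorithm that submits $q$ queries $Q_1,\dots,Q_q$ (where each $Q_i$ is an input to the given binary oracle) and then decodes the response vector to a partition. For every valid partition $\cC$ of $[n]$, let $r_\cC \in \{0,1\}^q$ denote the noise-free response vector obtained by evaluating the oracle's defining predicate on $(Q_1,\dots,Q_q)$ against $\cC$. Define the code
\[
    C \defeq \{r_\cC \mid \cC \text{ a valid hidden partition}\} \subseteq \{0,1\}^q.
\]

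The first step is to establish two properties of $C$. \emph{(i) Cardinality.} Because even error-free success requires that distinct valid partitions produce distinct transcripts, the map $\cC \mapsto r_\cC$ must be injective, so $|C|$ is at least the number of valid partitions. This is $B_n$ when $k$ is unknown and ${n\brace k}$ when $k$ is known. \emph{(ii) Distance.} Since the algorithm must correctly decode the hidden partition even when the oracle returns up to $\ell = cq$ erroneous answers, the Hamming balls of radius $\ell$ centered at $r_\cC$ and $r_{\cC'}$ must be disjoint whenever $\cC \neq \cC'$; otherwise a single transcript would be consistent with both partitions and decoding would fail for at least one of them. Hence $\Delta C \geq 2\ell + 1$, so $\delta(C) \geq 2c + 1/q > 2c$.

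The second step applies~\eqref{eq:bound-on-rate-of-codes}. Since $c \in (0,1/2)$ and $H_2$ is increasing on $[0,1/2]$, $\delta(C)/2 \geq c$ gives $H_2(\delta(C)/2) \geq H_2(c)$, and hence
\[
    R(C) = \frac{\log |C|}{q} \leq 1 - H_2(c) + o(1).
\]
Rearranging and using that $1 - H_2(c)$ is a positive constant bounded away from zero on any compact subinterval of $(0,1/2)$, we obtain $q \geq \Omega\bigl(\log |C| / (1 - H_2(c))\bigr)$. Substituting the cardinality lower bounds from step~(i) together with the asymptotics~\eqref{eq:asymptotics-of-bell-numbers} and~\eqref{eq:asymptotics-of-stirling-numbers} then yields
\[
    q \geq \Omega\!\left(\frac{n\log n}{1 - H_2(c)}\right) \quad\text{and}\quad q \geq \Omega\!\left(\frac{n\log k}{1 - H_2(c)}\right)
\]
in the $k$-unknown and $k$-known cases respectively.

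The main obstacle I anticipate is the subtle correctness of step~(ii): the argument needs to handle an arbitrary (possibly randomized) oracle faithfully, and one must verify that for adversarial error the disjoint-balls condition really is necessary, not merely sufficient. A secondary technical point is ensuring the argument applies uniformly over all four example oracles ($\beta,\gamma,\zeta,\eta$) without relying on their specific structure; this is handled by keeping the reduction at the black-box level of the induced transcript map $\cC \mapsto r_\cC$, so the only oracle-specific information needed is that each query produces a single bit.
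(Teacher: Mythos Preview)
Your proposal is correct and follows essentially the same route as the paper: both construct a length-$q$ binary code from the noise-free transcript map $\cC \mapsto r_\cC$, argue that successful decoding under $cq$ adversarial errors forces $\Delta C \geq 2cq+1$, bound $|C|$ by $B_n$ or ${n\brace k}$, and then apply~\eqref{eq:bound-on-rate-of-codes} together with~\eqref{eq:asymptotics-of-bell-numbers} or~\eqref{eq:asymptotics-of-stirling-numbers}. Your write-up is in fact more explicit than the paper's about why the disjoint-balls condition is necessary, which is to the good.
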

Proof of this result is deferred to Appendix~\ref{appendix:deferred-proofs}.

\section{Las Vegas-Type Bounds}\label{section:las-vegas-type-bounds}
In this section, we bound the expected number of queries for a simple randomized strategy to learn partitions in the presence of errors. The central result of this section is that Algorithm~\ref{algo:Robust-RS} exactly recovers a partition $\cC$ making:
\begin{equation}
    (\ell+1) \left({n(k+1)\over 2}-k\right) + \ell
\end{equation}
queries in expectation using an $\ell$-faulty same-cluster oracle. This bound applies for both the case of $k$ known and $k$ unknown.

We first consider the following variant of the algorithms of Reyzin and Srivastava for the error-free regime. We think of these algorithms as processing a single vertex at a time, by iteratively looking for a pre-existing cluster this vertex might belong to, and assigning the vertex to a newly created cluster if none is found. We now analyze the performance of these algorithms when the ordering in which the vertices are processed is given by a permutation chosen uniformly at random. We refer to these algorithms as \texttt{Randomized\_RS} in the case of $k$ unknown, and \texttt{Randomized\_RS\_k} for the case of $k$ known. (See Algorithm~\ref{algo:randomized-k-unknown} and Algorithm~\ref{algo:randomized-k-known} respectively, in Appendix~\ref{appendix:algorithms}). Fix some partition $\cC$ and let $\alpha_{\cC}$ be an error-free same-cluster oracle for $\cC$. Let random variables $Q_{\cC}$ 
 and $Q_{\cC}^{k}$ denote the number of queries performed by \texttt{Randomized\_RS} and \texttt{Randomized\_RS\_k} respectively when given access to $\alpha_{\cC}$. We note that by a simple coupling argument, $\E{}{Q_{\cC}^{k}} \leq \E{}{Q_{\cC}}$. We now give an exact expression and upper bound on $\E{}{Q_{\cC}}$ and $\E{}{Q^{k}_{\cC}}$. Proof of this result can be found in Appendix~\ref{appendix:deferred-proofs}. 
\begin{lemma}\label{lemma:RS-expected-queries}
    Let $\cC =\{C_1, ... , C_k\}$ with $|C_a|=n_a$. Then:
    \begin{equation}\label{eq:general-expected-number-of-queries-las-vegas}
        \E{}{Q_{\cC}}= (n-k) + \sum_{\substack{a,b \in[k] \\a \neq b}} {n_an_b \over n_a + n_b},
    \end{equation}
    and in particular, for any partition $\cC$
    \begin{equation}\label{eq:all-equal-bound}
        \E{}{Q^{k}_{\cC}}\leq \E{}{Q_{\cC}} \leq {n(k+1)\over 2}-k.
    \end{equation}
\end{lemma}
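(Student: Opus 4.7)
The plan is to analyze \texttt{Randomized\_RS} in the error-free regime by splitting its queries into positive (``same-cluster'') and negative (``different-cluster'') answers. Exactly one positive query is charged to each vertex that is not the first of its cluster under the random permutation---the successful match against the representative of its own cluster---so the total number of positive queries is deterministically $n-k$.

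The heart of the proof is counting negatives in expectation. Let $\mathrm{rep}(C_a)$ denote the first vertex of $C_a$ in the random permutation $\pi$. I would first establish the following structural claim about the deterministic behavior of \texttt{Randomized\_RS} (which probes the discovered clusters in their discovery order): for every $v \in C_a$ and every $b \ne a$, the algorithm queries $v$ against $\mathrm{rep}(C_b)$ if and only if $\mathrm{rep}(C_b)$ precedes $\mathrm{rep}(C_a)$ in $\pi$. The proof is a short case split. If $v = \mathrm{rep}(C_a)$, the algorithm probes every already-discovered cluster, which is precisely the set of $C_b$ with $\mathrm{rep}(C_b) < v = \mathrm{rep}(C_a)$. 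If instead $v$ comes after $\mathrm{rep}(C_a)$, then probing in discovery order halts upon matching $\mathrm{rep}(C_a)$, so $v$ probes $\mathrm{rep}(C_b)$ iff $\mathrm{rep}(C_b)$ is discovered before $\mathrm{rep}(C_a)$.

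Summing the resulting indicators over $v \in C_a$ and taking expectations, each ordered pair $(a,b)$ with $a \ne b$ contributes $n_a \cdot \Pr[\mathrm{rep}(C_b) < \mathrm{rep}(C_a)]$ negative queries. Since a uniformly random permutation of $[n]$ restricts to a uniformly random order on $C_a \cup C_b$, the first element there is equally likely to be any member of $C_a \cup C_b$, so $\Pr[\mathrm{rep}(C_b) < \mathrm{rep}(C_a)] = n_b/(n_a + n_b)$. Each ordered pair thus contributes $n_a n_b/(n_a+n_b)$, and adding the $n-k$ positive queries proves~\eqref{eq:general-expected-number-of-queries-las-vegas}.

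For the bound in~\eqref{eq:all-equal-bound}, I would apply the harmonic--arithmetic mean inequality $n_a n_b/(n_a+n_b) \le (n_a + n_b)/4$. Summing over ordered pairs, each $n_a$ appears $2(k-1)$ times, which gives $\sum_{a \ne b} n_a n_b/(n_a+n_b) \le (k-1)n/2$; combined with the $n-k$ positives this yields $n(k+1)/2 - k$. Finally, $\E{}{Q_{\cC}^{k}} \le \E{}{Q_{\cC}}$ follows by coupling both algorithms on the same random permutation: in \texttt{Randomized\_RS\_k}, once all $k$ clusters have been discovered, knowing $k$ in advance allows one to assign a vertex that fails $k-1$ probes directly to the remaining cluster, saving at most one query per such vertex, so $Q_{\cC}^{k} \le Q_{\cC}$ pointwise. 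The only non-routine step is the structural claim relating query issuance to the relative order of representatives; the remainder is a direct probabilistic and algebraic computation.
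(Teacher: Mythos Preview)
Your proof is correct and follows essentially the same approach as the paper's: both decompose queries into $(n-k)$ intra-cluster comparisons plus inter-cluster comparisons governed by indicator variables, compute the probability $\Pr[\mathrm{rep}(C_b)<\mathrm{rep}(C_a)]=n_b/(n_a+n_b)$ via the uniform-permutation argument, and then bound the sum using $n_a n_b/(n_a+n_b)\le (n_a+n_b)/4$. Your explicit justification of the structural claim and of the coupling $Q_{\cC}^k\le Q_{\cC}$ spells out details the paper leaves implicit, but the argument is the same.
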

In particular, Equation~\ref{eq:all-equal-bound} shows that the worst-case expected number of queries is attained when the partition is completely balanced, i.e. every cluster has size approximately $n/k$.

We now consider the error-tolerant regime. Let random variables $R_{\cC}$ and $R_{\cC}^{k}$ denote the number of queries performed by Algorithms~\ref{algo:Robust-RS} and ~\ref{algo:Robust-RS-k} respectively when given access to an $\ell$-faulty oracle $\alpha_{\cC}$. By linearity of expectation, the result in Lemma~\ref{lemma:RS-expected-queries} allows us to establish the desired result:
\[
    \E{}{R_{\cC}} \leq \E{}{R_{\cC}^k} \leq (\ell+1) \left({n(k+1)\over 2}-k\right) + \ell
\]
which implies Theorem~\ref{thm:las-vegas-type-bound}.

\section{Conclusion and Future Directions}\label{section:future-directions}
In this paper we initiated the study of learning partitions from same-cluster queries subject to bounded adversarial errors. The results in this paper open up several avenues for future research. We considered the problem's worst-case query complexity in both the adaptive and the non-adaptive setting, and established a game-theoretic framework for analyzing the problem. The gap between the upper and the lower bound in the number of queries in the adaptive case is still significant (compare \eqref{eq:lower-bound-equation} and \eqref{eq:upper-bound-expression}), and we suspect that our RUCC game framework could be further leveraged to find the exact values of $V(n,\ell,k)$. 

This regime also seems to have potential connections with the theory of error-correcting codes with noiseless feedback \cite{gupta2022binary, haeupler2015communication} and more generally error-resilient interactive communication~\cite{gupta2022optimal}, and we believe it would be valuable to further explore this link. It would also be interesting to extend the results in this paper on error-tolerant learning from arbitrary oracles, such as the ones introduced in Section~\ref{section:non-adaptive-algorithms}, by providing complementary upper bounds.

\section{Acknowledgements}
The authors wish to thank Daniel Di Benedetto for providing helpful advice early on in the project; Yibo Jiang for suggesting related work; Mark Olson, Angela Wang, and Nathan Waniorek for useful discussions.

\bibliographystyle{abbrvnat}
\bibliography{library}

\appendix

\section{Proof of Theorem~\ref{thm:k-known-non-adaptive-bound}}\label{appendix:proof-of-non-adaptive-results}

In this section, we prove the results needed to complete the proof of Theorem~\ref{thm:k-known-non-adaptive-bound}. We begin by noting that every non-adaptive algorithm for learning partitions can be thought of as submitting a graph $G = (V,E)$ of queries, i.e. $E = \{uv \mid \alpha(u,v) \text{ was queried by the algorithm}\}$. We refer to this graph as the \emph{query graph}. We will also consider the complement of the query graph $\bar{G} = (V,\bar{E})$ where $\bar{E} = \{uv \mid \alpha(u,v) \text{ was not queried by the algorithm}\}$. We will refer to this graph as the \emph{unqueried graph}. 

\begin{lemma}
    Let $n>k = 2$, then the number of queries needed by a non-adaptive algorithm to learn partitions when $k$ is known is $n-1$.
\end{lemma}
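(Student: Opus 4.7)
My plan is to prove the lemma by establishing matching upper and lower bounds on the number of same-cluster queries needed.

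\textbf{Upper bound.} I would exhibit a concrete non-adaptive algorithm using $n-1$ queries: submit the star $\{\{1,i\}\mid i\in\{2,\ldots,n\}\}$. Given the responses, define $S\defeq\{1\}\cup\{i\mid \alpha_{\cC}(1,i)=1\}$ and output $\{S, V\setminus S\}$. Since $k=2$ and both clusters are nonempty, this is the unique $2$-partition consistent with the responses, so the algorithm recovers $\cC$ exactly.

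\textbf{Lower bound.} For any non-adaptive algorithm, model its queries as a query graph $G=(V,E)$ in the sense introduced just before the lemma. Suppose for contradiction $|E|\le n-2$. Then $G$ is disconnected, with $r\ge 2$ connected components. I claim some component $A$ must satisfy $|A|\le n-2$: otherwise all $r\ge 2$ components would have size at least $n-1$, forcing $n\ge r(n-1)\ge 2(n-1)$ and hence $n\le 2$, contradicting $n\ge 3$. Setting $B\defeq V\setminus A$ gives $|B|\ge 2$, so we may pick any $b\in B$ and define
\[
    \cC\defeq\{A\cup\{b\},\; B\setminus\{b\}\},\qquad \cC'\defeq\{\{b\},\; V\setminus\{b\}\}.
\]
Both are valid $2$-partitions since every listed block is nonempty, and they are distinct because the block containing $b$ has size $|A|+1\ge 2$ in $\cC$ but only size $1$ in $\cC'$.

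The core step is to verify that $\cC$ and $\cC'$ produce identical oracle responses on every edge of $G$. No edge of $G$ lies between $A$ and $B$ because $A$ is a component. Any edge inside $A$ is answered ``same-cluster'' by both partitions, since each puts all of $A$ within a single block (namely $A\cup\{b\}$ in $\cC$ and $V\setminus\{b\}$ in $\cC'$). Any edge inside $B\setminus\{b\}$ is likewise answered ``same-cluster'' by both. Finally, any edge of the form $\{b,v\}$ with $v\in B\setminus\{b\}$ is answered ``different-cluster'' by both: in $\cC$, $b$ sits with $A$ while $v$ lies in $B\setminus\{b\}$, and in $\cC'$, $b$ is alone while $v\in V\setminus\{b\}$. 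Hence the algorithm cannot distinguish $\cC$ from $\cC'$ and must fail on at least one.

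\textbf{Main obstacle.} The only subtlety is guaranteeing a component $A$ with $|A|\le n-2$, which is exactly what lets us choose $b$ so that both $\{b\}$ and $B\setminus\{b\}$ are nonempty. This is handled by the pigeonhole-style counting argument above, which crucially uses $n\ge 3$, matching the hypothesis $n>k=2$; the remainder is a direct case-by-case verification over edge types of $G$.
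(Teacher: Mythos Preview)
Your proof is correct and follows essentially the same approach as the paper: the upper bound via a star query graph and the lower bound via disconnectedness of the query graph when fewer than $n-1$ edges are submitted. The paper's proof is terser, simply asserting that a disconnected query graph makes recovery impossible; you have supplied the explicit pair of indistinguishable $2$-partitions and the verification that they agree on every queried edge, which is exactly the detail the paper omits.
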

\begin{proof}
It is easy to see that when $k=2$, one can learn a partition by querying any tree on $V$, thus making it possible to learn the entire partition structure non-adaptively with $n-1$ queries. In fact, the parallel Reyzin-Srivastava algorithm (discussed in Section~\ref{ssec:adaptive-error-free}) is a special case of this strategy, in which the tree queried is a star, and it recovers the optimal query bound. Note that querying fewer than $n-1$ pairs leaves the query graph disconnected making it impossible to recover the partition, hence the bound is tight.
\end{proof}

We will make use of the following result.
\begin{lemma}\label{lem:overlapping-edges-lemma}
If $n > k\geq 3$, and the unqueried graph $\bar{G}$ contains two incident edges $uv$ and $vw$, then there is a set of answers to the queries which is compatible with two distinct $k$-partitions of $V$.
\end{lemma}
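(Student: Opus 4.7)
The plan is to exhibit two distinct $k$-partitions $\cC_1, \cC_2$ of $V$ such that for every queried pair $\{x,y\}\in E$ we have $x\sim_{\cC_1}y \iff x\sim_{\cC_2}y$. Once these are in hand, any answer sequence induced by this common same-cluster relation is consistent with both $\cC_1$ and $\cC_2$, proving the lemma.

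Since $n>k\geq 3$, the set $V\setminus\{u,v,w\}$ has $n-3\geq k-2\geq 1$ elements, so I can fix an arbitrary partition $\pi$ of $V\setminus\{u,v,w\}$ into exactly $k-2$ nonempty parts. I then define
\[
\cC_1 \defeq \bigl\{\{u,v\},\{w\}\bigr\}\cup\pi \qquad\text{and}\qquad \cC_2 \defeq \bigl\{\{u\},\{v,w\}\bigr\}\cup\pi.
\]
Both are $k$-partitions of $V$ (two small clusters at $\{u,v,w\}$ plus the $k-2$ parts of $\pi$), and they differ because the cluster of $v$ contains $u$ in $\cC_1$ but $w$ in $\cC_2$.

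The core of the argument is the same-cluster verification on each queried pair $\{x,y\}\in E$. If $v\notin\{x,y\}$, then restricting both partitions to $V\setminus\{v\}$ yields the same partition, namely $\{\{u\},\{w\}\}\cup\pi$, so the two indicators coincide. If instead $x=v$ (the case $y=v$ being symmetric), then the hypothesis $uv,vw\notin E$ forces $y\in V\setminus\{u,v,w\}$; in $\cC_1$ the cluster of $v$ is $\{u,v\}$, and in $\cC_2$ it is $\{v,w\}$, both of which are disjoint from the $\pi$-cluster containing $y$, so $v\not\sim_{\cC_1}y$ and $v\not\sim_{\cC_2}y$ both hold.

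The only subtle point, which I expect to be the main obstacle even though it is mild, is precisely the case $x=v$: here one crucially uses the hypothesis that \emph{both} $uv$ and $vw$ are unqueried, since this is exactly what rules out the one scenario ($y\in\{u,w\}$) in which the two partitions would disagree on a $v$-query. The assumption $n>k$ is used only to guarantee that $\pi$ exists, and $k\geq 3$ is used so that $k-2\geq 1$ and so that $\cC_1,\cC_2$ are genuine $k$-partitions. In particular, the lemma would fail for $k=2$: there one cannot spare two singleton clusters at $\{u\}$ and $\{w\}$.
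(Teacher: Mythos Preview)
Your proof is correct and takes essentially the same approach as the paper: you construct the identical pair of partitions $\{\{u,v\},\{w\}\}\cup\pi$ and $\{\{u\},\{v,w\}\}\cup\pi$ with $\pi$ a $(k-2)$-partition of $V\setminus\{u,v,w\}$, and your case analysis on queried pairs is just a more explicit version of the paper's observation that all $v$-queries return $-1$ while all other queries see the common restriction $\{\{u\},\{w\}\}\cup\pi$.
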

\begin{proof}
Consider querying a graph $G$ which does not contain $uv$ and $vw$ for some $u, v, w\in V$. Let $V' \defeq V\setminus \{v\}$, suppose all the queries not involving $v$ are consistent with a partition of $V'$ which places $u$ in one cluster by itself, $w$ in a cluster by itself, and all the remaining vertices of $V'$ in a partition $\cC'$ containing the $k-2$ remaining clusters. Note that this makes use of the assumption that $k \geq 3$. Further, suppose that all the queries made involving $v$ returned $-1$. Then two possible partitions of $V$ are consistent with this set of queries: $\{\{v,u\},\{w\}\}\cup\cC'$ in which $\{v,u\}$ is a cluster, and $\{\{u\},\{v,w\}\}\cup\cC'$ in which $\{v,w\}$ is a cluster.
\end{proof}

\begin{lemma}\label{lemma:nonadaptative-errorfree-k-equals3}
Let $n\geq 5 $ and $k = 3$, then the number of queries needed by a non-adaptive algorithm to learn partitions when $k$ is known is $\binom{n}{2}-\left\lfloor\frac{n}{2}\right\rfloor$. \footnote{For the case of $(k,n)=(3,4)$, a simple argument shows that one needs $5=\binom{4}{2}-1$ queries to learn the correct partition.}
\end{lemma}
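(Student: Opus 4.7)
The plan is to establish matching upper and lower bounds by analyzing the structure of the unqueried graph $\bar{G}$. For the lower bound I would invoke Lemma~\ref{lem:overlapping-edges-lemma} directly: if $\bar{G}$ contains two incident edges, then some response pattern admits two distinct $3$-partitions, so no non-adaptive algorithm can always succeed. Hence $\bar{G}$ must be a matching, which on $n$ vertices has at most $\lfloor n/2\rfloor$ edges, yielding a lower bound of $\binom{n}{2}-\lfloor n/2\rfloor$ queries.

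For the upper bound I would exhibit an algorithm that queries every pair except those in a fixed maximum matching $M$ of size $\lfloor n/2\rfloor$ on $V$, and argue that its responses uniquely determine the hidden $3$-partition. Suppose for contradiction that two $3$-partitions $\cC \neq \cC'$ are both consistent with the answers; since all queried pairs agree, the set $S$ of pairs on which they disagree is contained in $M$, and in particular its pairs are vertex-disjoint.

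The structural heart of the argument is the following claim for each $\{u,v\}\in S$, WLOG with $u\sim_{\cC} v$ and $u\not\sim_{\cC'} v$: the cluster of $u$ in $\cC$ is exactly $\{u,v\}$, and the clusters of $u$ and $v$ in $\cC'$ are both singletons. I would prove this by using consistency on queried pairs together with the fact that $u$ and $v$ each lie in only one matching edge: any third vertex $w$ with $w\sim_{\cC} u$ would force $w\sim_{\cC'} u$ by consistency on the queried pair $\{w,u\}$, then $w\sim_{\cC} v$ by transitivity in $\cC$, and then $w\sim_{\cC'} v$ by consistency on $\{w,v\}$, yielding $u\sim_{\cC'} v$, a contradiction. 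A symmetric application rules out any third element in the $\cC'$-cluster of $u$ or of $v$.

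The final step is to count clusters. Partition $S = A \sqcup B$, where $A$ indexes the pairs merged in $\cC$ and $B$ the pairs merged in $\cC'$; set $a\defeq|A|$ and $b\defeq|B|$. On $V \setminus \bigcup S$ the partitions agree, contributing $k_0$ common clusters, so $|\cC| = k_0 + a + 2b = 3$ and $|\cC'| = k_0 + b + 2a = 3$. Subtracting forces $a=b$, and then $3a+k_0=3$. The case $a=0$ gives $S=\varnothing$ and $\cC=\cC'$; the case $a\geq 1$ forces $k_0=0$, $a=1$, and $V=\bigcup S$ has exactly $4$ vertices, contradicting $n\geq 5$. The main obstacle is the structural step, which is where the matching property of $\bar{G}$ and the smallness of $k=3$ have to work together to prevent a longer chain of mutually consistent disagreements.
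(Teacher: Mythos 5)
Your proof is correct, and while the lower bound is the same as the paper's (both invoke Lemma~\ref{lem:overlapping-edges-lemma} to force the unqueried graph to be a matching), your sufficiency argument takes a genuinely different route. The paper proves sufficiency constructively: it fixes a balanced bipartition $(U,L)$, queries everything except a bipartite matching between $U$ and $L$, recovers the restriction of $\cC$ on each side separately, and then runs a case analysis (on how many clusters $U$ meets) to extend the coloring to all of $V$. You instead give an abstract uniqueness argument: if two $3$-partitions $\cC\neq\cC'$ are both consistent with the answers, their disagreement set $S$ is contained in the matching $M$, a local transitivity argument shows every pair in $S$ must be a $2$-element cluster on one side and two singletons on the other, and a cluster count ($k_0 + a + 2b = k_0 + 2a + b = 3$) then forces $a=b$ and $3a+k_0 = 3$, leaving only $a=0$ (so $\cC=\cC'$) when $n\geq 5$. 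Each approach has something to recommend it: the paper's argument directly yields an efficient recovery procedure, while yours works for \emph{any} maximum matching (not just a bipartite one), isolates exactly where $k=3$ enters (the count $3a+k_0=3$ fails to close off the $a=1$ case when $k\geq 4$, matching the $\binom{n}{2}-1$ answer in that regime), and transparently explains the exceptional case $n=4$ noted in the footnote (where $a=1$, $k_0=0$ becomes feasible).
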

\begin{proof}
    Let the underlying partition be equal to $\cC = \{P_{A}, P_{B}, P_{C}\}$. We begin by showing that $\binom{n}{2}-\left\lfloor\frac{n}{2}\right\rfloor$ queries suffice to learn $\cC$. It is easy to see that the following proof also yields an efficient algorithm for recovering the partition $\cC$ given the results of the queries.
    
    Let $(U,L)$ be a balanced partition of $V$, i.e. $U \sqcup L = V$ and $||U|-|L||\leq 1$. Consider the case in which the unqueried graph $\bar{G}$ consists of an arbitrary maximum matching between the elements in $U$ and $L$ and the query graph $G$ contains all the remaining edges. Note that $|E|=\binom{n}{2}-\left\lfloor\frac{n}{2}\right\rfloor$. Because the complete graphs $G[U]$ and $G[L]$ are queried, the restrictions of $\cC$ on $U$ and on $L$ can be determined from the responses. By Pigeonhole Principle, at least one of the two restrictions consists of at least two clusters. Without loss of generality, we will assume that the restriction of $\cC$ onto $U$ contains at least two clusters.
    
    Later, one may extend the restriction of $\cC$ on $U$ to include all of $V$, mainly by using the following principle, which leverages the assumption $k=3$: \emph{if there is a subset $S\subset V$ such that the restriction of $\cC$ on $S$ is uniquely determined, and $u,w\in S$ belong to distinct clusters, then given $v\not\in S$, if $vu, vw \in E$, then the restriction of $\cC$ to $S\cup\{v\}$ can be uniquely determined}. 
    
    We have the following cases, without loss of generality:
    \begin{itemize}
        \item \textbf{Case 1. }There exist three vertices in $U$, such that each is in a different cluster. Every vertex $v\in L$ is queried together with at least two of these three vertices, so its cluster is determined by the principle above. %
        \item \textbf{Case 2. }All vertices in $U$ lie in $P_{A}$ and $P_{B}$. By the above principle we can determine the cluster of all but at most one vertex $v\in L$, which would only remain if $vw\in\bar{E}$, where $w$ is the only $w\in U\cap P_{A}$. If 
        all the other vertices in $L$ lie in $P_{A}\cup P_{B}$, then $v\in P_{C}$. Otherwise, there is a vertex $u\in L\cap P_{C}$ and by the above principle, the cluster of $v$ can be determined. 

    \end{itemize}

    We now show that any algorithm making strictly fewer than $\binom{n}{2}-\left\lfloor\frac{n}{2}\right\rfloor$ queries would not be able to recover the correct partition in general. In particular, we show that for any set of fewer than $\binom{n}{2}-\left\lfloor\frac{n}{2}\right\rfloor$ queries, there is a set of answers which are compatible with more than one 3-partition of $V$. We begin by noting that if the algorithm omits more than $\left\lfloor\frac{n}{2}\right\rfloor$ then the unqueried graph $\bar{G}$ (defined above) cannot be a matching. Hence, $\bar{G}$ must contain two incident edges $uv$ and $vw$ and the algorithm will not be able to guarantee recovery of the partition, by Lemma~\ref{lem:overlapping-edges-lemma}.
\end{proof}

\begin{lemma}\label{lemma:k-known-non-adaptive-bound}
Let $n>k \geq 4$, then the number of queries needed by a non-adaptive algorithm to learn partitions when $k$ is known is $\binom{n}{2}-1$.
\end{lemma}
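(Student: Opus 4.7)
The plan is to prove matching upper and lower bounds of $\binom{n}{2}-1$ on the non-adaptive query complexity when $k$ is known and $k\geq 4$. For the \emph{upper bound}, I will exhibit an algorithm that queries all pairs except a single edge, say $uv$. To show that this suffices to recover $\cC$, I will perform a case analysis based on whether $u$ (resp. $v$) has any same-cluster neighbor in $V\setminus\{u,v\}$. If there is some $w\in V\setminus\{u,v\}$ with $\alpha(u,w)=1$, then $u$'s entire cluster is known from the queried pairs, and in particular $\alpha(w,v)$ already reveals whether $v\in u$'s cluster; the symmetric case is analogous. The only remaining case is that neither $u$ nor $v$ has a same-cluster partner outside $\{u,v\}$, in which case exactly two partitions are compatible with the responses: one where $\{u,v\}$ forms a cluster and one where $\{u\}$ and $\{v\}$ are singletons. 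These two partitions differ in their cluster count by exactly one, so knowledge of $k$ breaks the ambiguity.

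For the \emph{lower bound}, I will show that any query graph $G$ with $|E(G)|\leq \binom{n}{2}-2$ leaves some pair of $k$-partitions indistinguishable. Fix two distinct edges $e_1,e_2\in \bar{E}$ of the unqueried graph. If $e_1$ and $e_2$ are incident, Lemma~\ref{lem:overlapping-edges-lemma} (which already handles $k\geq 3$) provides two distinct $k$-partitions consistent with the queries, and we are done. The main case is therefore when $e_1=uv$ and $e_2=xy$ are \emph{vertex-disjoint}. Here I will exploit the assumption $k\geq 4$ by constructing the two partitions
\[
    \cC_1 = \{\{u,v\},\{x\},\{y\},C_4,\ldots,C_k\}, \qquad \cC_2 = \{\{u\},\{v\},\{x,y\},C_4,\ldots,C_k\},
\]
where $C_4,\ldots,C_k$ is any partition of $V\setminus\{u,v,x,y\}$ into $k-3$ nonempty blocks; such a partition exists because $n>k\geq 4$ gives $n-4\geq k-3\geq 1$. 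Both partitions have exactly $k$ clusters, and they agree on $\sim_\cC$-relations for every pair except $uv$ and $xy$. Consequently, any set of responses to the queries of $G$ that is consistent with $\cC_1$ is also consistent with $\cC_2$, and the algorithm cannot distinguish them.

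\paragraph{Main obstacle.} The upper bound is essentially bookkeeping once the correct case split is identified. The delicate step is the disjoint-edges case of the lower bound: one must produce two $k$-partitions whose symmetric difference is supported exactly on the two unqueried pairs, which is why a constraint on $k$ is unavoidable. A short preliminary observation I will include justifies that if two $k$-partitions agree on all pairs outside $\{uv,xy\}$ and differ on one of them, they necessarily also differ on the other; otherwise their cluster counts cannot match. This rigidity is precisely what forces the pattern of merging $\{u,v\}$ while simultaneously splitting $\{x,y\}$ (or vice versa), and it is what makes $k\geq 4$ the right threshold---for $k=3$ one cannot simultaneously spare three singletons and fill the remaining $k-3$ classes, explaining the qualitatively different bound in Lemma~\ref{lemma:nonadaptative-errorfree-k-equals3}.
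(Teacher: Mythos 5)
Your proof is correct and takes essentially the same route as the paper: query all pairs except one, recover the restriction to $V' = V\setminus\{u,v\}$ from the queried pairs, and use $k$ to break the residual ambiguity when neither endpoint has a same-cluster partner in $V'$; the lower bound splits into incident vs.\ disjoint omitted pairs, invoking Lemma~\ref{lem:overlapping-edges-lemma} in the first case and constructing the pair of indistinguishable $k$-partitions $\bigl\{\{u,v\},\{x\},\{y\},\ldots\bigr\}$ and $\bigl\{\{u\},\{v\},\{x,y\},\ldots\bigr\}$ in the second, exactly as the paper does. The only cosmetic difference is that you organize the upper-bound case split by whether $u$ or $v$ has a partner in $V'$, whereas the paper organizes it by whether the restriction to $V'$ has $k$, $k-1$, or $k-2$ clusters; these are equivalent framings of the same argument.
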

\begin{proof}[\textbf{Proof of Lemma~\ref{lemma:k-known-non-adaptive-bound}}]
    We begin by showing that a non-adaptive algorithm querying all but one pair $u,v$ of elements of $V$ can efficiently recover the underlying partition $\cC$ independently of the answers received. Note that this requires exactly $\binom{n}{2}-1$ queries. We describe the algorithm at a high level, but it is easy to see that all the operations executed by the algorithm can be made to run in at most $O(n^2)$ time.
    
    First, consider the elements of $V'\defeq V \setminus\{u,v\}$. Given that every pair in this set was queried, upon receiving the answer to the $\binom{n}{2}-1$ queries, the algorithm can use the information received to recover the restriction of the partition $\cC$ to $V'$, and label all the elements of $V'$ accordingly. This can be achieved by simply running breadth-first search on the graph composed of all the 1 answers on $V'$. At this point the algorithm has labelled the elements of $V'$ in a way that is consistent with $\cC$, and it knows how many distinct clusters of $\cC$ are represented in $V'$. Because $\cC$ comprises of $k$ clusters on $V$, and only $2$ nodes are unaccounted for in $V'$ then the restriction of $\cC$ on $V'$ must comprise of either $k$, $k-1$, or $k-2$ distinct clusters.
        
    If this restriction comprises of $k$ distinct clusters, then all the clusters in the partition are represented in $V'$. In this case, there must exist (not necessarily distinct) elements $w_v, w_u \in V'$ such that $w_v \sim_{\cC} v$ and $w_u \sim_{\cC} u$ and since the pairs $(w_v,v)$ and $(w_u,u)$ have been queried by the algorithm, the algorithm can recover the correct partition structure by simply assigning $u$ and $v$ to the same clusters as $w_u$ and $w_v$ respectively.

    On the other hand, if the restriction  of $\cC$ on $V'$ comprises of $k-1$ clusters, then either $u$ and $v$ make up the last cluster; or $u$ alone makes up the last cluster and $v$ is part of a cluster already represented in $V'$; or vice-versa. The algorithm can distinguish between these three cases by checking which of $u$ or $v$, if either, belong to the same cluster as some $w\in V'$, something which can be easily determined from the queries made. This resolves the full structure of $\cC$ on $V$.
    
    Finally, if the restriction of $\cC$ on $V'$ contains $k-2$ distinct clusters, then $u$ and $v$ must each make up their own separate cluster, and hence the correct partition $\cC$ is simply the partition of $V'$ learned by the algorithm with two extra clusters: $\{u\}$ and $\{v\}$.

    This completes the proof that $\binom{n}{2}-1$ queries suffice to learn the partition.

    We now show that $\binom{n}{2}-1$ queries are in fact necessary to learn $\cC$ in the worst case. In other words, for any set of at most ${n\choose 2} - 2$ queries, there exist a set of answers to these queries which is consistent with multiple distinct partitions. We consider two cases: one in which there exist two omitted queries which are incident (i.e. they overlap by an element), and one in which all omitted queries are disjoint.
    
    In the first case, the result holds by Lemma~\ref{lem:overlapping-edges-lemma}.

    We then consider the case when the two omitted queries $\{u, v\}$ and $\{w, x\}$ are disjoint. In this case, there exists a subset $V' = V\setminus \{u, v, w, x\}$ such that all pairs in $V'$ have been queried, and all pairs $\{a, b\}$ have been queried for $a \in \{u, v, w, x\}$ and $b\in V'$. Because all pairs in $V'$ have been queried, we can fully recover the restriction of $\cC$ on $V'$. Suppose the restriction of $\cC$ on $V'$ comprises $k-3$ clusters and that all queries made involving $u, v, w$ and $x$ return $-1$. Then it is impossible to distinguish between the cases when the two remaining clusters are $\left\{\{u,v\},\{w\},\{x\} \right\}$ or $\left\{\{u\},\{v\},\{w, x\} \right\}$. Note that this last part crucially relies on the fact that $k > 3$.
\end{proof}

\section{Deferred Proofs}\label{appendix:deferred-proofs}
\begin{proof}[\textbf{Proof of Lemma~\ref{lemma:error-free-k-unknown}.}]
\begin{figure}[h]
    \centering\includegraphics[scale=0.5]{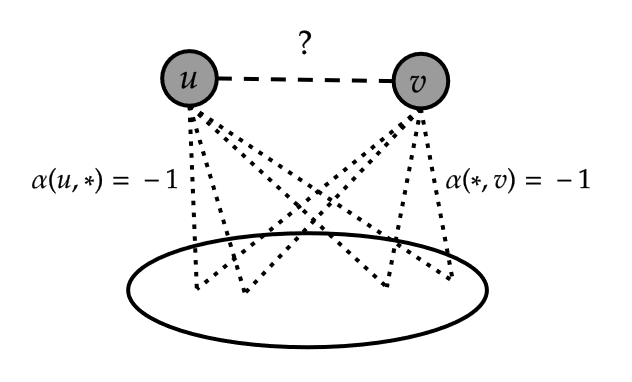}
    \caption{Diagram for proof of Lemma \ref{lemma:error-free-k-unknown}.}
    \label{fig:error-free-k-known}
\end{figure}
    
Consider the result of querying all but one pair $u,v$ of elements of the ground set $V$. Suppose all of the queries made involving $u$ and $v$ return $0$. Then it is impossible to determine whether $u$ and $v$ belong to the same cluster.
\end{proof}

\begin{proof}[\textbf{Proof of Lemma~\ref{lemma:information-theoretic-lowerbounds}.}]
 The crucial insight leveraged to establish this result is the following: existence of a non-adaptive algorithm which uses $q$ queries to learn a hidden partition on $n$ elements in the presence of $c\cdot q$  errors (when $k$ is unknown) implies the existence of a binary $(c\cdot q)$-error-correcting code of size $B_n$ with codewords of length $q$. For the case of $k$-known, a similar statement holds for a code of size ${n\brace k}$. We therefore obtain a lower bound on $q$ (in either $k$-regime) using known results from coding theory.

 Consider the case in which $k$ is unknown. In this setting, the relative distance of the implied code is $\delta(C) = 2c + 1/q \geq 2c$, and the rate is
\[
    R(C) = \frac{\log(B_n)}{q}
\]

In particular, by \eqref{eq:bound-on-rate-of-codes}, we have:
\[
    \frac{\log(B_n)}{q} \leq 1-H_2(c) + o(1).
\]
Hence, by \eqref{eq:asymptotics-of-bell-numbers}, we find that $q$ must satisfy
\[
    q = \Omega\left(\frac{n\log(n)}{1-H_2(c)}\right)
\]
in the $k$-unknown setting.

When $k$ is known, the size of the implied code becomes ${n\brace k}$, and employing a similar argument to the above, but using \eqref{eq:asymptotics-of-stirling-numbers} instead of \eqref{eq:asymptotics-of-bell-numbers} yields:
\[
    q = \Omega\left(\frac{n\log\left(k\right)}{1-H_2(c)}\right),
\]
as needed.
\end{proof}

\begin{proof}[\textbf{Proof of Lemma~\ref{lemma:RS-expected-queries}}.]
    We begin by establishing \eqref{eq:general-expected-number-of-queries-las-vegas}. We will assume without loss of generality that $V=[n]$. For any $i \in [n]$, let $a_i \in [k]$ be such that $i\in C_{a_i}$. For any $i\in [n]$ and any $a \in [k]\setminus\{a_i\}$, let $X_{i,a}$ be the indicator random variable of the event that element $i$ is compared to some element in cluster $C_a$ (before it gets compared with some element in the cluster $C_{a_{i}}$ and gets assigned to it).
    
    Note that, during the run of the algorithm, with the exception of a single representative from each cluster, each element $i$ gets compared to a unique element in its own cluster, amounting to $(n-k)$ many comparisons. The rest of the comparisons are inter-cluster comparisons, in which an element $i$ gets compared to a representative of a different cluster $C_a$. Note that each element gets compared to a representative of some cluster $C_a$ at most once.

    We also have:
    \[
        \E{}{X_{i,a}} = \Pr[X_{i,a} =1]  ={n_a \over n_{a_i} + n_{a}},
    \]
    since the above probability is equal to the probability that the first vertex in $C_a$ is processed before the first vertex in $C_{a_i}$ in a random permutation.
    
    We then have that the number of queries made by \texttt{Randomized\_RS} is given by:
    \begin{align*}
        \E{}{Q_{\cC}^{\texttt{RS}}}&= (n-k) + \sum_{\substack{i\in[n]\\ a\in[k]\setminus\{a_i\}}} \E{}{X_{i,a}} = (n-k) + \sum_{\substack{a,b \in[k] \\a \neq b}} {n_a n_b \over n_a + n_b}
    \end{align*}
    as needed.

    We now upper bound the above expression. For any $a,b\in [k]$, the inequality $4n_{a}n_{b}\leq (n_{a}+n_{b})^{2}$ holds; as a consequence, 
    \[
        \frac{n_{a}n_{b}}{n_{a}+n_{b}}\leq \frac{n_{a}+n_{b}}{4}.
    \]
    Therefore, 
    \begin{align*}
        \E{}{Q_{\cC}^{\texttt{RS}}}&= (n-k) + \sum_{\substack{a,b \in[k] \\a \neq b}} {n_a n_b \over n_a + n_b}\\
        &\leq (n-k) + \sum_{\substack{a,b \in[k] \\a \neq b}} {n_a+n_b \over 4}\\
        &= (n-k) + \sum_{a\in[k]}(k-1)\cdot \frac{n_{a}}{2}\\
        &= (n-k) + \frac{n\cdot(k-1)}{2}\\
        &= \frac{n\cdot(k+1)}{2} - k
    \end{align*}
\end{proof}

\begin{lemma}\label{lemma:repetition-bound}
    Consider the problem of learning a hidden partition on $n$ elements, given access to an $\ell$-faulty same-cluster oracle $\alpha$. Without loss of generality, we consider the problem where $k$ is unknown: the same statement and proof hold for $k$-known.
    
    Assume there exists an algorithm which, given a set $V$ of cardinality $n$ and a non-faulty same-cluster oracle, requires at most $q_{\text{error\_free}}$ queries in the worst case to learn a hidden partition on $V$. Then there exists an algorithm which learns a hidden partition on $V$ in the $\ell$-faulty regime which always makes at most $(\ell+1)q_{\text{error\_free}} +\ell$ queries in the worst case.
\end{lemma}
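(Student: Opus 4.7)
The plan is to explicitly construct the claimed algorithm $A'$ by taking the hypothetical error-free algorithm $A$ as a black box and wrapping each of its oracle calls in a simple majority-vote subroutine. Concretely, whenever $A$ issues a query $\{u,v\}$, the algorithm $A'$ repeatedly invokes the $\ell$-faulty oracle $\alpha$ on that same pair and maintains running counts of the two possible responses; it stops the subroutine as soon as one of the two response values has been returned $\ell+1$ times, and feeds that value back to $A$ as the (purported) answer.

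Correctness would then follow from a clean pigeonhole argument on the error budget. If the subroutine ever returned an incorrect answer, it would mean some value was returned $\ell+1$ times despite being wrong, so the oracle would have lied $\ell+1$ times on that single pair, contradicting the global budget of $\ell$ errors. Consequently every answer passed to $A$ is truthful, so $A$ is effectively running against a noiseless same-cluster oracle and therefore terminates after at most $q_{\text{error\_free}}$ simulated queries, recovering the true partition.

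For the worst-case query bound, I would charge the cost of the $i$-th simulated query to two quantities: the $\ell+1$ responses that form the winning majority, and the minority responses accrued before the subroutine halts. Let $n_i$ denote the number of actual oracle calls made during the $i$-th simulated query; the stopping rule guarantees $n_i \geq \ell+1$, and the $n_i - (\ell+1)$ minority responses must all be lies, since the winning majority contains the truthful answer. Summing over all simulated queries gives
\[
    \sum_{i=1}^{q_{\text{error\_free}}}\bigl(n_i - (\ell+1)\bigr) \;\leq\; \ell,
\]
and rearranging yields $\sum_i n_i \leq (\ell+1)\,q_{\text{error\_free}} + \ell$, which is exactly the claimed bound.

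I do not anticipate a serious obstacle here: the main subtlety is simply being careful that $A$'s adaptivity is preserved because it only ever sees correct responses, and that the accounting of minority responses as certified lies is tight. The argument is essentially the standard Berlekowitz/R\'enyi--Ulam-style repetition trick, and the same reduction works verbatim in the $k$-known regime by applying it to the algorithm of \cite{ReyzinSrivastava2007} in that setting.
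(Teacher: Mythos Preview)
Your proposal is correct and is essentially identical to the paper's proof: the paper constructs the same repetition subroutine (query until $\ell+1$ agreeing responses), and for the query bound it calls the minority responses ``spurious,'' observes each spurious response must be an oracle error so there are at most $\ell$ of them total, and counts at most $\ell+1$ non-spurious responses per simulated query---exactly your accounting via $\sum_i (n_i - (\ell+1)) \le \ell$.
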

\begin{proof}
    Let \Algo$_{\operatorname{exact}}$ denote an algorithm which, given a set $V$ of cardinality $n$ and a non-faulty same-cluster oracle $\alpha_{\textrm{error-free}}$, requires at most $q_{\text{error\_free}}$ queries in the worst case to learn a hidden partition on $V$.

    Consider the following algorithm \Algo$_{\operatorname{faulty}}$ for the $\ell$-faulty version of the problem: \Algo$_{\operatorname{faulty}}$ simulates  \Algo$_{\operatorname{exact}}$, but every time a query is made by \Algo$_{\operatorname{exact}}$,  \Algo$_{\operatorname{faulty}}$ repeats that same query until it receives $\ell+1$ consistent answers, and then takes these consistent answers to be the answer to \Algo$_{\operatorname{exact}}$'s query. It is easy to see that \Algo$_{\operatorname{faulty}}$ correctly solves the problem with an $\ell$-faulty oracle.
    
    The following simple argument shows that \Algo$_{\operatorname{faulty}}$ makes at most $(\ell+1)q_{\text{error\_free}}+\ell$ queries in the worst case. We say a query made by \Algo$_{\operatorname{faulty}}$ is \emph{spurious} if its disagrees with $\ell+1$ other queries made to the same set of vertices. Since the oracle being queried is only $\ell$-faulty, every spurious query is an error, and the total number of spurious queries is at most $\ell$. On the other hand, by definition, \Algo$_{\operatorname{faulty}}$ makes at most $\ell+1$ non-spurious queries for each query made by \Algo$_{\operatorname{exact}}$. The bound then follows.

\end{proof}

\newpage
\section{Relevant background on R\'enyi-Ulam Games}\label{appendix:renyi-ulam}
In this section we provide some background on the history of the Ulam liar game defined in Section~\ref{subsection:background}, for the interested reader. We define the value of the game $L(N,\ell)$ as the minimax number of questions needed for the questioner to determine the chosen number with certainty. The game and its variations have been extensively studied and many different bounds have been proposed to the value of $L(N,\ell)$ \cite{pelc1987solution,czyzowicz1989ulam, pelc1989searching,deppe2000solution,pelc2002searching}. \cite{spencer1984guess} studies a version of this game in which $\ell=1$ and all the questions have to be of the form ``is $x < a$?'' for some $a \in N$. He then gives a questioner strategy for this version of the game that is guaranteed to find the correct value of $x$ after:
\[
    \lceil\log_2N\rceil + 2\sqrt{2 \lceil\log_2N\rceil} + 2
\]
questions. \cite{rivest1978coping} show a bound of 
\[
    \log_2N + \ell \log_2\log_2N + O(\ell \log \ell)
\]
for the same version of the problem. 

\cite{pelc1987solution} considers the more general query model, which admits questions of the form ``is $x \in T$'' for $T$ some arbitrary subset of $[N]$. His results establish that 
\[
    L(N,1) = \begin{cases}
    \argmin \{ k\in \N \mid N(k+1) \leq 2^k\} & \text{if }N\text{ is even,}\\
    \argmin \{ k\in \N \mid N(k+1) + (k-1)\leq 2^k\} & \text{if }N\text{ is odd.}
    \end{cases}
\]
In another paper, \cite{spencer1992ulam} introduces the framework of chip-liar games (See also~\cite{alon2016probabilistic}) to analyze $L(N,\ell)$. A consequence of this work is that:
\begin{equation}
    N \leq {2^{L(N,\ell)} \over \Vol(\ell,L(N,\ell))},
\end{equation}
where $\Vol(\ell,q)$  denotes the volume of the Hamming ball in $\{0,1\}^q$ of radius $\ell$:
\[
    \textrm{Vol}(\ell, q) \defeq \sum_{i=0}^\ell {n \choose i}.
\]
\cite{hill1998solution}, give the value of $L(2^{20},\ell)$ for every $\ell$, and in particular, they show that for every $\ell \geq 8$, it is true that $L(2^{20},\ell) = 2\ell+6$. 

Understanding variations of this game is still an active area of research (See, e.g.~\cite{marini2005probabilistic, cicalese2013variations,cicalese2014perfect,cicalese2020multi}) and they have recently found applications in the analysis of online algorithms with imperfect advice \cite{angelopoulos2023r}.

\newpage
\section{Uniquely $k$-colorable Graphs}\label{appendix:surjectively-uniquely-kc}
We define a $k$-\emph{coloring} of $G$ a map $\chi:V\to [k]$, such for all $u\neq v$ in $V$, $\chi(u)\neq \chi(v)$. In some literature, a map satisfying this property is referred to as a \emph{proper} $k$-coloring, but we will ignore this distinction. We may also omit the value $k$ in the term ``$k$-coloring'' and just write ``coloring'' when no ambiguity arises.

We say that two $k$-colorings $\chi_1 ,\chi_2$ are \emph{distinct} if for every bijective map $\sigma:[k] \to [k]$, there exists an element $v\in V$ such that $\chi_1(v) \neq \sigma \circ \chi_2(v)$. Note that this condition is equivalent to:
\[
    \{\chi_1^{-1}(v) \mid v\in V\} \neq \{\chi_2^{-1}(v) \mid v\in V\}.
\]
A graph $G$ is uniquely $k$-colorable if it is $k$-colorable, but there are no two distinct $k$-colorings of $G$. Shaoji \cite{shaoji1990size} showed that every uniquely $k$-colorable graph $G = (V,E)$ with $|E|=m$ and $|V|=n$ satisfies:
\begin{equation}\label{eq:Shaojis-bound}
    m \geq n(k-1) - {k \choose 2}.
\end{equation}

This paper focuses on a slightly different property which we now define.

\begin{definition}[Uniquely Surjectively $k$-colorable Graph]
We say a graph is \emph{uniquely surjectively $k$-colorable} if there exists a \textbf{surjective} map $\chi:V \to [k]$ such that, for all $uv\in E$: $\chi(v) \neq \chi(u)$, and such that for any two such maps $\chi_{1}$ and $\chi_{2}$, there exists a bijective function $\sigma:[k]\to[k]$ such that $\chi_{2}=\sigma\circ\chi_{1}$. This simply means that there is a unique way to color the graph $G$ using \textbf{exactly} $k$ \textbf{distinct} colors. 
\end{definition}

We show the following:
\begin{proposition}\label{prop:unique-colorings}
Let $G=(V,E)$ be a graph with $|V| = n$, $|E|=m$ and let $k$ be positive integer. We have:
\begin{enumerate}
    \item If $k<n$, then $G$ is uniquely $k$-colorable if and only if $G$ is uniquely surjectively $k$-colorable,
    \item If $k=n$, then $G$ is always uniquely surjectively $k$-colorable,
    \item If $k=n$ then $G$ is uniquely $k$-colorable if and only if $G$ is the complete graph $K_n$.
\end{enumerate}
\end{proposition}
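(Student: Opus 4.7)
My plan is to prove the three parts in order of increasing difficulty, starting with Part 2 (essentially trivial), then Part 3 (short, using a non-edge), and finally Part 1, which is the core content. Part 2 is immediate: when $k=n$, any surjective $k$-coloring must assign a distinct color to every vertex by pigeonhole, so the induced partition is always the singleton partition $\{\{v\} : v \in V\}$. Existence is trivial (any bijection $V \to [n]$ is automatically a proper coloring since all vertices receive distinct colors) and the induced partition is clearly unique. For Part 3, if $G = K_n$ the same pigeonhole argument applies to \emph{every} $n$-coloring (not just surjective ones), so all $n$-colorings yield the singleton partition and $G$ is uniquely $n$-colorable. Conversely, if $G \neq K_n$, pick a non-edge $\{u,v\}$ and exhibit two $n$-colorings with different partitions: the coloring giving each vertex its own color (singleton partition) and the coloring that assigns $u,v$ a shared color while distributing the remaining vertices injectively over the remaining $n-1$ colors (yielding the partition $\{\{u,v\}\} \cup \{\{w\} : w \notin \{u,v\}\}$). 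Both colorings are valid since $u,v$ are non-adjacent, so $G$ is not uniquely $n$-colorable.

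For Part 1, I would handle the two implications separately. For $(\Rightarrow)$, assume $G$ is uniquely $k$-colorable with induced partition $\mathcal{C}$ and $k < n$. The key step is a local recoloring move: if $|\mathcal{C}| < k$, then by $|\mathcal{C}| < k \leq n-1$ pigeonhole gives a class $C \in \mathcal{C}$ with $|C| \geq 2$. Recoloring a single vertex $v \in C$ to an unused color $c \in [k]$ yields a valid $k$-coloring (the color $c$ appears nowhere among $v$'s neighbors) whose induced partition replaces $C$ by $\{v\}$ and $C \setminus \{v\}$ — a strictly different partition, contradicting unique $k$-colorability. Hence $|\mathcal{C}| = k$, so the witnessing $k$-coloring is already surjective, and uniqueness of the surjective-coloring partition is inherited because surjective $k$-colorings form a subclass of all $k$-colorings.

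For $(\Leftarrow)$, assume $G$ is uniquely surjectively $k$-colorable with partition $\mathcal{C}$ of size $k$. Let $\chi'$ be any $k$-coloring with partition $\mathcal{C}'$. If $|\mathcal{C}'| = k$, then $\chi'$ is surjective and $\mathcal{C}' = \mathcal{C}$ by hypothesis. Otherwise $|\mathcal{C}'| < k$, and I would derive a contradiction by constructing two distinct refinements of $\mathcal{C}'$ into exactly $k$ parts, each of which is automatically a valid surjective $k$-coloring partition because subsets of independent sets are independent. Uniqueness of $\mathcal{C}$ then forces both refinements to equal $\mathcal{C}$, which is impossible. For the construction, the inequality $|\mathcal{C}'| < k \leq n-1$ gives $\sum_{C \in \mathcal{C}'}(|C|-1) = n - |\mathcal{C}'| \geq 2$, so either some class $C_0 \in \mathcal{C}'$ has $|C_0| \geq 3$ (peel off $\{v_1\}$ vs.\ $\{v_2\}$ from $C_0$ for two distinct single-split refinements, then make identical further canonical splits to reach $k$ parts), or at least two classes $C_1, C_2 \in \mathcal{C}'$ have size $\geq 2$ (peel a singleton from $C_1$ vs.\ from $C_2$). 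The main obstacle I anticipate is precisely this refinement bookkeeping, in particular ruling out the boundary configuration of exactly one size-$2$ class with all others singletons, which would admit only a single refinement; fortunately this case forces $n = |\mathcal{C}'| + 1 \leq k$, contradicting $k < n$, so it cannot arise.
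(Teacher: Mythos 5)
Your Parts 2 and 3 match the paper's argument exactly. Your Part 1 $(\Rightarrow)$ is also essentially the paper's idea, with a cosmetic difference: you argue by contradiction that the (unique) induced partition must already have exactly $k$ parts, whereas the paper constructively promotes any $k$-coloring to a surjective one by iteratively recoloring a collision vertex with an unused color; both routes are fine and rely on the same recoloring move.

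The gap is in Part 1 $(\Leftarrow)$. Your plan in the case $|C_0|\geq 3$ is to peel $\{v_1\}$ vs.\ $\{v_2\}$ from $C_0$ and then ``make identical further canonical splits to reach $k$ parts,'' but this does not automatically produce two \emph{distinct} $k$-partitions: if the further refinements end up fully singletonizing the descendants of $C_0$, the two chains converge. Concretely, take $n=5$, $k=4$, $\mathcal{C}'=\{\{1,2,3\},\{4,5\}\}$, $C_0=\{1,2,3\}$: peeling $\{1\}$ then splitting $\{2,3\}$, versus peeling $\{2\}$ then splitting $\{1,3\}$, both yield $\{\{1\},\{2\},\{3\},\{4,5\}\}$. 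You flag ``refinement bookkeeping'' as the main obstacle and rule out the single-size-$2$-class configuration, but that is not the only way things can go wrong, and you never pin down what ``canonical'' means or why it avoids convergence. The gap is fixable (one can always keep, say, $v_2,v_3$ together in the first chain since $n-1\geq k$, or restrict the further splits to classes other than $C_0$ whenever possible), but as written the step is unjustified. The paper sidesteps this entirely by first applying the iterative-recoloring construction from the forward direction to normalize the hypothetical coloring to use \emph{exactly} $k-1$ colors; then the two-way case split ($|C_i|\geq 3$, or $|C_i|,|C_j|\geq 2$) is the \emph{final} move and distinctness of the resulting surjective $k$-colorings is immediate, with no trailing refinement to track. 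You may want to adopt that ordering.
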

\begin{proof}

We begin by proving both directions of part 1.

($\Rightarrow$) Let $G$ be uniquely $k$-colorable. We begin by showing that $G$ admits a coloring that uses exactly $k$-colors. Since $G$ is $k$-colorable then $G$ admits a coloring $\chi$ which uses at most $k$ colors. If $|\chi(V)| =k$ then we are done. Otherwise, since $k < n$, there exists a pair of vertices that are colored with the same color as each other. One can then pick one of those vertices and color it with a color $i \not\in \chi(V)$, producing a new proper coloring $\chi'$ which uses strictly more colors than $\chi$. It is easy to see that $\chi'$ satisfies $\chi'(u) \neq \chi'(v)$ for every $uv\in E$, since $\chi$ satisfied the same condition. Repeating this procedure yields the desired surjective $k$-coloring. Since there is a unique $k$-coloring of $G$ and $G$ admits a $k$-coloring with exactly $k$ colors, then there is a unique surjective $k$-coloring of $G$, so $G$ is uniquely surjectively $k$-colorable.

($\Leftarrow$) Let $k <n$ and let $G$ be uniquely surjectively $k$-colorable. Then $G$ is clearly $k$-colorable with some surjective coloring $\chi$, moreover, no other surjective $k$-coloring exists, so any other coloring of $G$ must use strictly fewer than $k$ colors. We now show that no such coloring can exist. 

Suppose, for the sake of contradiction, there exists a coloring $\chi_2$ using $k'<k$ colors. By the procedure in the previous paragraph, one can construct a new coloring $\chi_3$ which uses exactly $k-1$ colors. Let $C_1, ... ,C_{k-1}$ be its corresponding color classes. Since $k<n$ at least one of the following must hold: either there exists $C_i$ satisfying $|C_i| \geq 3$ or there exists two color classes $C_i$ and $C_j$ such that $|C_i|\geq 2$ and $|C_j|\geq 2$. In each case we can produce $2$ distinct surjective $k$-colorings of $G$. In the first case, we do so by breaking up the color class $C_i$ in two different ways: fix two distinct elements $u,v$ in $C_i$. Let $\chi_u$ and $\chi_v$ be the colorings given by:
\[
    \chi_u(x) = \begin{cases} \chi_3(x) &x\in V\setminus\{u\},\\  c^* & x=u, \end{cases} \hspace{0.5cm}\text{and}\hspace{0.5cm} \chi_v(x) = \begin{cases} \chi_3(x) &x\in V\setminus\{v\},\\  c^* & x=v, \end{cases} 
\]
where $c^{*}$ is an element in $[k] \setminus \chi_3(V)$. This way we obtain two distinct proper surjective $k$-colorings of $G$, $\chi_u$ and $\chi_v$. Now, consider the case in which $|C_i| \geq 2$ and $|C_j|\geq 2$, let $u \in C_i$ and $v\in C_j$. In this case too, $\chi_v$ and $\chi_u$ as defined above are two distinct surjective $k$-colorings of $G$. So two distinct surjective $k$-colorings of $G$ must exist. This leads to a contradiction, since $G$ was assumed to be uniquely surjectively $k$-colorable. This completes the proof of part 1.

For part 2, we note that any graph admits the surjective coloring which assigns a distinct color to every vertex. Furthermore, since $n=k$ there is no other way to surjectively color the vertices of $G$ with $n$ colors (once we account for color permutations), yielding the statement.

For part 3, let $G$ be any graph other than $K_n$. There exists two distinct vertices $u,v\in V$ such that $uv \not\in E$. Then $G$ admits two distinct $n$-colorings, $\chi$ and $\chi'$ defined as follows: $\chi$ assigns all vertices to a distinct color. $\chi'$ assigns all the vertices to a distinct color, with the exception of $u$ and $v$ which are assigned the same color. Hence $G$ cannot be uniquely $k$-colorable. On the other hand, it is easy to see that $K_n$ is uniquely $n$-colorable, since the coloring assigning each vertex to a distinct color is the only valid $n$-coloring of $K_n$, completing the proof of part 3.
\end{proof}
The following simple consequence of~\eqref{eq:Shaojis-bound} and Proposition \ref{prop:unique-colorings} is a central ingredient in the body of the paper.
\begin{lemma}
Let $k$ be a positive integer. Let $G=(V,E)$ be a uniquely surjectively $k$-colorable graph with $|V| =n$ and $|E|=m$, if $k<n$, then:
\[
     m \geq n(k-1) - {k \choose 2}.
\]
\end{lemma}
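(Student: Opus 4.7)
The plan is to observe that the conclusion follows essentially immediately by chaining together the two results identified in the sentence preceding the lemma: Proposition~\ref{prop:unique-colorings} (part 1) and Shaoji's bound \eqref{eq:Shaojis-bound}. So the proof is really just two short steps, with essentially no hidden technical difficulty.

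First, I would invoke part 1 of Proposition~\ref{prop:unique-colorings}. The hypotheses of the lemma match its setting exactly: $G$ is uniquely surjectively $k$-colorable with $k<n$. The proposition asserts that, in this regime, being uniquely surjectively $k$-colorable is equivalent to being uniquely $k$-colorable (in the classical sense used by Shaoji). Thus I immediately obtain that $G$ is uniquely $k$-colorable.

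Second, I would apply \eqref{eq:Shaojis-bound}, which states that any uniquely $k$-colorable graph on $n$ vertices with $m$ edges must satisfy
\[
    m \geq n(k-1) - \binom{k}{2}.
\]
Since $G$ satisfies the hypothesis of this bound by the first step, the conclusion of the lemma follows directly.

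Since each step is a direct quotation of an already established result, there is no real obstacle. The only thing worth double-checking is that the hypothesis $k<n$ is genuinely needed (which it is, since Proposition~\ref{prop:unique-colorings} part 3 shows that uniquely $k$-colorable graphs for $k=n$ are only the complete graphs $K_n$, so the equivalence used in step one would fail in that boundary case); this justifies why the lemma is stated only for $k<n$.
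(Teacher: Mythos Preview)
Your proposal is correct and matches the paper's approach exactly: the paper states this lemma as a ``simple consequence of~\eqref{eq:Shaojis-bound} and Proposition~\ref{prop:unique-colorings}'' without giving further detail, and your two-step chaining of Proposition~\ref{prop:unique-colorings} part~1 followed by Shaoji's bound is precisely that consequence spelled out.
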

\citet{LM2021} make implicit use of this result, since the query complexity bounds obtained in their work would only be applicable to the case $k<n$. Hence, this appendix highlights the result in Proposition \ref{prop:unique-colorings}, in order to clarify that even if our work studies the query complexity question on all pairs $(k,n)$, the behaviour of such complexity bounds differ in the case $k=n$.

\newpage
\section{Algorithms}\label{appendix:algorithms}
\SetKwRepeat{Do}{do}{while}
In this appendix, we give detailed descriptions of the algorithms used in the body of the paper, for reference purposes. We begin with the vanilla version of the algorithm of \cite{ReyzinSrivastava2007} for learning partitions given access to an (error-free) same-cluster oracle. This was proved to achieve the optimal query complexity for the problem by \cite{LM2021}. The first version is used when $k$ is unknown to the learner.

\begin{algorithm2e}[h]
\caption{\texttt{RS}$(V,\alpha)$}\label{algo:vanilla-RS-k-unknown}
    \KwIn{Finite set $V$, (error-free) same-cluster oracle $\alpha$ for some partition $\cC$ of $V$}
    \KwOut{$\cC \defeq \{\cC_i\}_{i=1}^k$ a partition of $V$.}
    $num\_clusters=0$\\
    // Each iteration of the following loop processes a single vertex\\
    \While{$V\neq \emptyset$} 
    { 
        $Is\_Placed = $ False\\
        Pick $v \in V$ arbitrarily.\\
        \For{$a=1, ... , num\_clusters$} {
            Query $\alpha(v,w)$ for some $w\in C_a$\\
            \If{$\alpha(v,w) = 1$} {
                $C_a = C_a \cup\{v\}$\\
                $Is\_Placed = $ True\\
                \textbf{break}
            }
        }
        \If {$Is\_Placed ==$ False}{
            $num\_clusters = num\_clusters+1$\\
            $C_{num\_clusters} = \{v\}$
        }
        $V = V \setminus \{v\}$\\
    }
\Return{$\cC\defeq \{C_a\}_{a=1}^{k}$}\;
\end{algorithm2e}

\newpage
A similar algorithm is used when $k$ is known to the learner (Algorithm~\ref{algo:vanilla-RS-k-known}). This algorithm is analogous to the previous one, but if one fails to place an element $v$ of $V$ in $k-1$ distinct clusters, then it uses the fact that $k$ is known to infer that $v$ must be in the remaining cluster, thus saving $(n-k)$ queries in the worst case.
\begin{algorithm2e}[h]
\caption{\texttt{RS\_k}$(V,\alpha,k)$}\label{algo:vanilla-RS-k-known}
    \KwIn{Finite set $V$, same-cluster oracle $\alpha$, number of clusters $k$}
    \KwOut{$\cC \defeq \{\cC_i\}_{i=1}^k$ a partition of $V$.}
    $num\_clusters=0$\\
    // Each iteration of the following loop processes a single vertex\\
    \While{$V \neq \emptyset$} 
    { 
        Pick $v \in V$ arbitrarily.\\
        $Is\_Placed = $ False\\
        \For{$a=1, ... , \min\{num\_clusters,k-1\}$} {
            Query $\alpha(v,w)$ for some $w\in C_a$\\
            \If{$\alpha(v,w) = 1$} {
                $C_a = C_a \cup\{v\}$\\
                $Is\_Placed = $ True\\
                \textbf{break}
            }
        }
        \If {$Is\_Placed ==$ False}{
            \If {$num\_clusters <k$} {
            $num\_clusters = num\_clusters+1$\\
            $C_{num\_clusters} = \{v\}$
            }
            \Else {
                $C_{num\_clusters} = C_{num\_clusters}\cup \{v\}$
            }
        }
        $V = V \setminus \{v\}$\\
    }
\end{algorithm2e}
\newpage
We then give a randomized algorithm for the error-tolerant version of the problem. This algorithm's worst-case and expected performance are analyzed in Section~\ref{ssec:adaptive-error-tolerant-upperbound} and Section~\ref{section:las-vegas-type-bounds} respectively.
\begin{algorithm2e}[h]
\caption{\texttt{Robust\_RS}$(V,\alpha,\ell)$}\label{algo:Robust-RS}
    \KwIn{A finite set $V$, same-cluster oracle $\alpha$, error-tolerance $\ell$.}
    \KwOut{$\cC \defeq \{\cC_i\}_{i=1}^k$ a partition of $V$.}
    $num\_cluster=0$\\
    // Each iteration of the following loop processes a single vertex\\
    \While{$V\neq \emptyset$} 
    { 
        $Is\_Placed = $False\\
        Pick $v \in V$ uniformly at random.\\
        \For{$a=1, ... , num\_clusters$} {
            For some $w\in C_a$, query $\alpha(v,w)$ repeatedly until $\ell+1$ of the responses received agree. Let $r \in \{\pm 1\}$ be the value of those responses. \\
            \If{$r = 1$} {
                $C_a = C_a \cup\{v\}$\\
                $Is\_Placed$ = True\\
                \textbf{break}
            }
        }
        \If{$Is\_Placed == $\emph{False}}{
            $num\_cluster = num\_clusters+1$\\
            $C_{num\_clusters} = \{v\}$
        }
        $V = V \setminus \{v\}$\\
    }
\end{algorithm2e}

\begin{algorithm2e}[h]
\caption{\texttt{Robust\_RS\_k}$(V,\alpha,k,\ell)$}\label{algo:Robust-RS-k}
    \KwIn{A finite set $V$, same-cluster oracle $\alpha$, number of clusters $k$, error-tolerance $\ell$.}
    \KwOut{$\cC \defeq \{\cC_i\}_{i=1}^k$ a partition of $V$.}
    $num\_cluster=0$\\
    // Each iteration of the following loop processes a single vertex\\
    \While{$V\neq \emptyset$} 
    { 
        $Is\_Placed = $False\\
        Pick $v \in V$ uniformly at random.\\
        \For{$a=1, ... ,\min \{num\_clusters, k-1\}$} {
            For some $w\in C_a$, query $\alpha(v,w)$ repeatedly until $\ell+1$ of the responses received agree. Let $r \in \{\pm 1\}$ be the value of those responses. \\
            \If{$r = 1$} {
                $C_a = C_a \cup\{v\}$\\
                $Is\_Placed$ = True\\
                \textbf{break}
            }
        }
        \If{$Is\_Placed == $\emph{False}}{
            \If {$num\_clusters <k$} {
            $num\_cluster = num\_clusters+1$\\
            $C_{num\_clusters} = \{v\}$
            }
            \Else {
                $C_{num\_clusters} = C_{num\_clusters}\cup \{v\}$
            }
        }
        $V = V \setminus \{v\}$\\
    }
\end{algorithm2e}

\newpage
\noindent We then give the following parallelized version of the algorithms of Reyzin and Srivastava, which are used to show that these algorithms can be made to run in at most $k$ rounds of adaptivity. As per the non-parallel versions, there is an algorithm for the case of $k$ unknown (Algorithm~\ref{algo:parallel-learn}) and one for the case of $k$ known (Algorithm~\ref{algo:parallel-learn-k}).
\begin{algorithm2e}[h]
\caption{\texttt{Parallel\_RS}$(V,\alpha)$}\label{algo:parallel-learn}
    \KwIn{A finite set $V$, same-cluster oracle $\alpha$.}
    \KwOut{$\cC \defeq \{\cC_i\}_{i=1}^k$ a partition of $V$.}
    $num\_clusters = 0$\\
    // All the queries made in a single iteration of the while loop are submitted at once
    \While{$V \neq \emptyset$} {
        $num\_clusters =num\_clusters+1$\\
        Fix some $v \in V$, then query $\alpha(u,v)$ for every $u\in V\setminus\{v\}$\\
        $C_{num\_clusters} = \{v\}\cup \{u\in V\setminus\{v\} \mid \alpha(u,v) = 1\}$\\
        $V = V \setminus C_{num\_clusters}$
    }
\Return{$\cC\defeq \{C_a\}_{a=1}^{k}$}\;
\end{algorithm2e}

\begin{algorithm2e}[h]
\caption{\texttt{Parallel\_RS\_k}$(V,\alpha, k)$}\label{algo:parallel-learn-k}
    \KwIn{A finite set $V$, a same-cluster oracle $\alpha$, a number of clusters $k \in \mathbb{N}$.}
    \KwOut{$\cC \defeq \{\cC_i\}_{i=1}^k$ a partition of $V$.}
    // All the queries made in a single iteration of the while loop are submitted at once
    \For{$a=1,\dots,k-1$} {
        $k = k+1$\\
        Fix some $v \in V$, then query $\alpha(u,v)$ for every $u\in V\setminus\{v\}$\\
        $C_a = \{v\}\cup \{u\in V\setminus\{v\} \mid \alpha(u,v) = 1\}$\\
        $V = V \setminus C_a$
    }
    $C_k = V$\\
\Return{$\cC\defeq \{C_a\}_{a=1}^{k}$}\;
\end{algorithm2e}

\newpage
We now give randomized versions of the algorithms of Reyzin and Srivastava, which we used to obtain Las-Vegas bounds on the runtime of our error-tolerant algorithm. As before, we have a $k$-unknown version (Algorithm~\ref{algo:randomized-k-unknown}) and a $k$-known version (Algorithm~\ref{algo:randomized-k-known}).
\begin{algorithm2e}[h]
\caption{\texttt{Randomized\_RS}$(V,\alpha)$}\label{algo:randomized-k-unknown}
    \KwIn{A finite set $V$, same-cluster oracle $\alpha$}
    \KwOut{$\cC \defeq \{\cC_i\}_{i=1}^k$ a partition of $V$.}
    $num\_clusters=0$\\
    // Each iteration of the following loop processes a single vertex\\
    \While{$V\neq \emptyset$} 
    { 
        $Is\_Placed =$ False 
        Pick $v \in V$ uniformly at random.\\
        \For{$a=1, ... , num\_clusters$} {
            Query $\alpha(v,w)$ for some $w\in C_a$\\
            \If{$\alpha(v,w) = 1$} {
                $C_a = C_a \cup\{v\}$\\
                $Is\_Placed =$ True
                \textbf{break}
            }
        }
        \If {$Is\_Placed ==$ False}{
            $num\_clusters = num\_clusters+1$\\
            $C_{num\_clusters} = \{v\}$
        }
        $V = V \setminus \{v\}$\\
    }
\Return{$\cC\defeq \{C_a\}_{a=1}^{k}$}\;
\end{algorithm2e}

\begin{algorithm2e}[h]
\caption{\texttt{Randomized\_RS\_k}$(V,\alpha,k)$}\label{algo:randomized-k-known}
    \KwIn{A finite set $V$, same-cluster oracle $\alpha$, a number of clusters $k\in \N$.}
    \KwOut{$\cC \defeq \{\cC_i\}_{i=1}^k$ a partition of $V$.}
    $num\_clusters=0$\\
    // Each iteration of the following loop processes a single vertex\\
    \While{$V\neq \emptyset$} 
    { 
        $Is\_Placed =$ False\\ 
        Pick $v \in V$ uniformly at random.\\
        \For{$a=1, ... , \min\{num\_clusters,k-1\}$} {
            Query $\alpha(v,w)$ for some $w\in C_a$\\
            \If{$\alpha(v,w) = 1$} {
                $C_a = C_a \cup\{v\}$\\
                $Is\_Placed =$ True 
                \textbf{break}
            }
        }
        \If {$Is\_Placed ==$ False }{
            \If {$num\_clusters <k$} {
            $num\_clusters = num\_clusters+1$\\
            $C_{num\_clusters} = \{v\}$
            }
            \Else {
                $C_{num\_clusters} = C_{num\_clusters} \cup \{v\}$
            }
        }
        $V = V \setminus \{v\}$\\
    }
\Return{$\cC\defeq \{C_a\}_{a=1}^{k}$}\;
\end{algorithm2e}
\end{document}